\documentclass[10pt, journal]{IEEEtran}
\usepackage[numbers]{natbib}
\makeatletter
\renewcommand{\@biblabel}[1]{\makebox[5em][l]{[#1]}}
\makeatother
\usepackage{pgfplots}
\usepackage{tikz}
\usetikzlibrary{calc}
\usepgfplotslibrary{groupplots}
\pgfplotsset{compat=1.18}
\usepackage{titlesec}

\titleformat{\paragraph}[runin]
  {\normalfont\normalsize\bfseries} 
  {}                               
  {0pt}                            
  {}                               

\titlespacing*{\paragraph}
  {0pt}                            
  {1.5ex plus 0.5ex minus .2ex}    
  {1em}                            

\usepackage{amsmath,amsfonts,bm}

\def\eqref#1{equation~\ref{#1}}

\def\1{\bm{1}}

\DeclareMathAlphabet{\mathsfit}{\encodingdefault}{\sfdefault}{m}{sl}
\SetMathAlphabet{\mathsfit}{bold}{\encodingdefault}{\sfdefault}{bx}{n}

\newcommand{\E}{\mathbb{E}}

\usepackage{array}
\usepackage{url}
\usepackage[utf8]{inputenc}
\usepackage[T1]{fontenc}
\usepackage{bbm}
\usepackage{ifthen}
\usepackage{mathtools}              
\usepackage{amssymb}
\usepackage{amsthm}
\usepackage{tabulary}
\usepackage{subcaption}
\usepackage{graphicx}

\usepackage{enumitem}
\usepackage{tikz}
\usetikzlibrary{positioning, fit} 
\usepackage[ruled]{algorithm2e}
\usepackage{color}
\usepackage{xcolor}
\usetikzlibrary{arrows.meta, positioning}

\usepackage[
    colorlinks=true,
    linkcolor=green,
    citecolor=blue,
    urlcolor=blue,
    filecolor=green,
    breaklinks=true
]{hyperref}
\usepackage[nameinlink,capitalize,noabbrev]{cleveref}
\makeatletter
\renewcommand{\eqref}[1]{\textup{\tagform@{\ref{#1}}}}
\makeatother
\hypersetup{
pdftitle={Local and Global Contraction Principles for MCMC Mixing},
pdfsubject={,},
pdfauthor={Alireza Daeijavad and Shahab Asoodeh},
pdfkeywords={MCMC mixing, contraction coefficients, hockey-stick divergence, projected Langevin Monte Carlo, independent Metropolis--Hastings},
}

\newcommand{\sE}{\mathsf{E}}

\newcommand{\sK}{\mathsf{K}}
\newcommand{\renyi}{R\'enyi}
\newcommand{\egamma}{$\mathsf{E}_\gamma$}
\newcommand{\TV}{\mathsf{TV}}
\newcommand{\msmooth}{$M$-smooth}

\newtheorem{definition}{Definition}
\newtheorem{theorem}{Theorem}
\newtheorem{example}{Example}
\newtheorem{remark}{Remark}
\newtheorem{proposition}{Proposition}
\newtheorem{corollary}{Corollary}
\newtheorem{lemma}{Lemma}

\usepackage{xcolor}

\title{Local and Global Contraction Principles for MCMC Mixing}

\usepackage{lipsum}

\makeatletter
\def\blfootnote{\gdef\@thefnmark{}\@footnotetext}
\makeatother

\author{%
Alireza Daeijavad and
Shahab Asoodeh \\

\thanks{${}^\dagger$ A. Daeijavad and S. Asoodeh are with the Department of Computing and Software, McMaster University, Hamilton,
ON L8S 4K1, Canada (email: {daeijava, asoodehs}@mcmaster.ca).}
}

\date{}

\begin{document}

\onecolumn

\maketitle

\begin{abstract}
We develop a contraction-based framework for proving mixing-time bounds for Markov chain Monte Carlo algorithms. The framework is built around global and local contraction coefficients of Markov kernels under the $\mathsf E_\gamma$-divergence with $\gamma\ge1$. For projected Langevin Monte Carlo on a compact convex domain, we show that Gaussian smoothing yields an explicit global contraction coefficient for the $\mathsf E_\gamma$-divergence. This gives a direct proof of exponential convergence to the discretized stationary distribution for general smooth, possibly non-convex potentials. The rate is explicit, accommodates arbitrary random-batch sampling schemes, and yields convergence guarantees for several divergences, including KL, $\chi^2$, and R\'enyi divergences. 
For independent Metropolis--Hastings with target $\pi$, proposal $q$, and unbounded importance weight $w=d\pi/dq$, global contraction coefficients are typically trivial. We therefore introduce a local contraction coefficient on the core $C_R=\{w\le R\}$ and prove that it controls the rejection profile on the core. This yields warm-start convergence bounds governed by the local contraction coefficient and the tail profile $H_R=\pi(w>R)$, recovering sharp existing moment-based convergence rates when $\mathbb E_q[w^p]<\infty$ for some $p>1$, while remaining effective in heavy-tailed regimes where no finite moment of order $p>1$ exists.
\end{abstract}

\section{Introduction}

Sampling from a target distribution $\pi$ is a central problem in statistics, machine learning, and scientific computing. Markov chain Monte Carlo (MCMC) methods approach this task by constructing a Markov chain whose distribution converges to $\pi$ or, for discretized algorithms, to a controlled stationary approximation of $\pi$. A fundamental question is therefore quantitative: how quickly does the law of the chain approach stationarity, and in which metric? This question is especially delicate for modern sampling algorithms, whose transition kernels often combine deterministic maps, Gaussian noise, projections, stochastic gradients, or Metropolis--Hastings accept-reject steps. 

This paper develops a strong data-processing inequality (SDPI) perspective on this problem. Given a Markov kernel $\sK$ and a divergence $D$, an SDPI bound has the form $D(\mu\sK\|\nu\sK)\le \eta D(\mu\|\nu)$ for some $\eta<1$; the smallest such $\eta$ is the global contraction coefficient of $\sK$.  Such an inequality converts a one-step contraction property into a mixing bound by iteration. The main advantage of this viewpoint is structural: rather than proving convergence separately for each algorithm, one identifies where contraction enters the transition kernel and uses data processing to propagate it through the rest of the update.

The divergence used throughout is the $\mathsf E_\gamma$-divergence (also known as hockey-stick divergence) for $\gamma\ge1$. This family is particularly well suited to SDPI analysis. It includes total variation as the endpoint, namely $\mathsf E_1(\mu\|\nu)=\mathsf{TV}(\mu, \nu)$, and more  importantly, its full profile controls many familiar divergences through integral representations (see identity in \eqref{f_representation}). Thus, an $\mathsf E_\gamma$ contraction theorem provides a mechanism for proving convergence in a broad class of divergences relevant to sampling, including KL and $\chi^2$-divergences.

We apply this perspective to two settings that require different forms of contraction. The first is projected Langevin Monte Carlo (P-LMC) on a compact convex set. In this case, the update decomposes into a drift step, Gaussian smoothing, and a projection. The projection and drift steps are non-expansive or controlled by the data processing inequality, while the Gaussian smoothing step provides a strict global SDPI on compact sets. This leads to a global contraction argument. The second setting is independent Metropolis--Hastings (IMH). Here a global SDPI is typically unavailable: when the importance weight $w=d\pi/dq$ is unbounded, where $\pi$ is the target and $q$ is the proposal, the chain can reject with probability arbitrarily close to one, and any uniform one-step contraction becomes trivial. For this reason, the appropriate analogue is \textit{local} SDPI: we prove contraction on the high-probability core $C_R=\{w\le R\}$ and control the remaining error through the tail profile $H_R=\pi(w>R)$.

The two analyses share the same contraction principle, but use it in different regimes. For P-LMC, compactness and Gaussian smoothing yield a global contraction coefficient. For IMH, the same SDPI principle must be localized: the local coefficient controls the holding probability on the core $C_R$, while $H_R$ measures the price of leaving that core. Thus the paper develops two complementary uses of SDPI for mixing: global contraction when the kernel has global smoothing, and local contraction when the kernel mixes only on a high-probability region. Detailed definitions of these algorithms are provided in Sections~\ref{Sec:PLMC} and \ref{sec:MH}.

\paragraph{Contributions.}
Our first contribution is a global SDPI analysis of P-LMC. We show that for smooth potentials on a compact convex set, the P-LMC kernel contracts $\mathsf E_\gamma$-divergence exponentially fast for every $\gamma\ge1$. Notably, this global contractivity does not require convexity of the potential. The contraction coefficient is explicit and depends on the diameter of the drifted set before the Gaussian smoothing step. This gives a clean explanation of why P-LMC admits a global contraction analysis on compact domains: Gaussian noise contracts the hockey-stick profile once the pre-noise image has bounded diameter.

Our second contribution is an average-case convergence bound for the stochastic-gradient P-LMC. More precisely, we prove that both $\mathsf E_\gamma(\mu_n\|\pi^\eta)$ and $\mathsf E_\gamma(\pi^\eta\|\mu_n)$ converge to zero exponentially fast for every $\gamma\geq 1$, where  $\mu_n$ is the distribution of the $n$th step of the chain and $\pi^\eta$ is the stationary distribution of the discretized chain with $2\eta$ as the Gaussian noise parameter. When the drift is computed using a random batch, the contraction coefficient can be averaged over the batch distribution. This separates the effect of the sampling scheme from the worst-case smoothness bound and yields sharper guarantees when different batches have different smoothness constants. A worst-case corollary recovers a simpler bound independent of the batching rule.

Our third contribution is a framework for transferring from the hockey-stick convergence to broader divergences. Since the P-LMC result controls both profiles $\mathsf E_\gamma(\mu_n\|\pi^\eta)$ and $\mathsf E_\gamma(\pi^\eta\|\mu_n)$, the integral representation of general $f$-divergences in terms of $\sE_\gamma$-divergence yields convergence bounds for a broad class of $f$-divergences with twice differentiable $f$, including KL-divergence, $\chi^2$-divergence, and R\'enyi divergence. This highlights the advantage of proving a full $\mathsf E_\gamma$-divergence profile bound rather than a single TV convergence result.

Our fourth contribution is a local-SDPI analysis of IMH. For the core $C_R=\{w\le R\}$, we define a local hockey-stick contraction coefficient $\rho_\alpha(R)$ for the truncated target $\pi_R=\pi(\cdot\mid C_R)$. Under a non-atomic proposal, this coefficient directly controls the pointwise rejection probability $r(x)$ on the core, namely,  $r(x)\le \rho_\alpha(R)$ for $x\in C_R$. We further prove the explicit bound $\rho_\alpha(R)\le 1-h_R/R$, where $h_R=\pi(C_R)$. These two results, together with a sharp rejection-profile estimate, yield a parametric convergence under an $L$-warm start:
$$\mathsf E_\gamma(\mu_n\|\pi)\lesssim (L+1)\Big[e^{-\frac{nh_R}{R}}+H_R\Big],$$
for every $R$ and $\gamma\geq 1$. 
This formulation recovers the moment-based rate of \cite{deligiannidis2024importance} as a special case under warm starts.  Moreover, we show that such hockey-stick convergence bound  can be directly translated into KL and $\chi^2$-divergences which, unlike the framework developed for P-LMC, follows from the assumption of warm starts. 

\paragraph{Relation to prior work on P-LMC.}
The closest line of work for the P-LMC part is the sharp mixing-time analysis of \cite{altschuler2022resolving}. Their work resolves the mixing time of projected Langevin algorithms in the convex and smooth setting and introduces tools from differential privacy into sampling.  
While powerful, this machinery is restricted to convex potentials and does not seem to extend to the non-convex setting. Additionally, their proof technique can only account for a particular sampling scheme, namely, sampling without replacement (i.e., fixed-size mini batch). In contrast, our convergence results hold for non-convex potentials and allow  arbitrary sampling schemes (e.g., Poisson sampling, which is widely used in optimization and sampling literature.) 

Compared to other existing results (see Table~\ref{tab_overview}), our contributions offer three key advantages: (1) the derived bounds apply to a broader class of potentials, requiring only smoothness, whether the potentials are convex or non-convex, (2) the results hold for a wide range of $f$-divergences, including KL divergence, R\'enyi divergence, TV distance, and Hellinger distance, and (3) our proof technique is independent of the batching scheme used to construct the batch at each iteration. As a result, various sampling strategies, such as Poisson sampling and sampling without replacement, can be applied for selecting the batches.

\paragraph{Relation to prior work on non-convex Langevin.}
There is a large literature on non-convex sampling for Langevin dynamics and its discretizations. Existing analyses often rely on functional inequalities or dissipativity-type assumptions, such as log-Sobolev, Poincar\'e, weak Poincar\'e, or related inequalities, and obtain convergence in Wasserstein, KL, $\chi^2$, R\'enyi, or general $f$-divergences \cite{raginsky2017non,vempala2019advances,erdogdu2021convergence,erdogdu2022chisquared,chewi2022analysis,Hosseini2023Towards,mitra2025fast}. These works primarily study unprojected Langevin dynamics or LMC. The closest projected non-convex result is \cite{lamperski2021projected}, who analyze P-LMC in $W_1$ under mild non-convex assumptions through a comparison between continuous and discrete processes. Our analysis is different: it works directly with the discrete projected chain, targets the biased stationary distribution $\pi^\eta$, and obtains divergence-profile bounds from a one-step SDPI argument. See Table~\ref{tab_overview} for clearer comparison and also Appendix~\ref{whole_literature} for more comprehensive literature review. 

\setlength{\tabcolsep}{2pt}

\begin{table}[t]
    \caption{Summary of convergence results for Langevin dynamics and related algorithms, with 'Type' indicating convergence to the target or biased distribution (i.e., stationary distribution of the discretized variant). }
    \label{tab_overview}
    \centering
    \begin{tabulary}{\columnwidth}{|>{\centering\arraybackslash}m{25mm}|>{\centering\arraybackslash}m{12mm}|>{\centering\arraybackslash}m{12mm}|>
    {\centering\arraybackslash}m{35mm}|>{\centering\arraybackslash}m{20mm}|C|}
        \hline
        Reference & Algo. & Convex & Other Assumptions & Metric & Type\\
        \hline
        \cite{Hosseini2023Towards} & LD & No & WPI, s-H\"older & \renyi{} & to target\\
        \hline
        \cite{raginsky2017non} & LMC & No & LSI, \msmooth{}, dissipative & $W_2$ & to target\\
        \hline
        \cite{Hosseini2023Towards} & LMC & No & WPI, s-H\"older & \renyi{} & to target\\
        \hline
        \cite{mitra2025fast} & LMC & No & \msmooth{}, $f$-Sobolev Inequality & $f$-divergence & to biased\\
        \hline
        \cite{lamperski2021projected} & P-LMC & No & \msmooth{}, uniform sub-Gaussian gradients & $W_1$ & to target\\
        \hline
        \cite{bubeck2018sampling} & P-LMC & Yes & \msmooth{}, Lipschitz & TV & to target\\
        \hline
        \cite{altschuler2022resolving} & P-LMC & Yes & \msmooth{} & TV & to biased\\
        \hline
        Ours & P-LMC & No & \msmooth{} & $f$-divergence & to biased\\
        \hline
    \end{tabulary}
\end{table}

\paragraph{Relation to MH and drift-minorization.}
The Metropolis--Hastings literature has long emphasized that convergence depends on the compatibility between the proposal and the target. 
If $\pi\ll q$, then the IMH chain is $\pi$-irreducible, aperiodic, and $\pi$-invariant and thus $\TV(\sK^n(x, \cdot), \pi)\to 0$ $\pi$-a.s.\ as $n\to \infty$, where $\sK^n$ denotes the $n$-step transition kernel.  More precisely, \cite{MengersonTweedie} showed that uniform ergodicity is essentially equivalent to the proposal dominating the target uniformly, or equivalently, to the importance weight $w=d\pi/dq$ being bounded. This identifies the globally contractive regime: if $w$ is bounded, then taking $R=\|w\|_\infty$ makes the core $C_R=\{w\le R\}$ equal to the whole state space and gives $H_R=0$. 
However, when it comes to non-asymptotic behavior, there is an important distinction between two cases: either the weight is bounded, in which case the chain is geometrically ergodic with exact rates obtained in \cite{wang2022exact,brown2024exact}, or the weight is unbounded and the convergence cannot be geometric \cite{roberts2011quantitative,Andrieu_IMH,deligiannidis2024importance}. 

In fact, when $w$ is unbounded, uniform ergodicity fails, and convergence is governed by the tail behavior of $w$ and by repeated rejections. This was recently formalized by \cite{deligiannidis2024importance} who provided polynomial bounds on the total variation distance to stationarity under moment constraints on $w$: $\E_q[w^p]<\infty$ for some $p>1$. 
Their proof uses a common-randomness coupling in which two IMH chains share the same proposals and acceptance variables. Under this construction, convergence is governed by how long the chain started from the larger importance weight keeps rejecting, making the rejection profile $r(x)^n$ and its stationary average $\int r^nd\pi$ the central finite-time quantities. Under finite moment assumptions on $w$, they obtain a sharp polynomial TV bound. We recover the same moment-based rate under warm starts, but our formulation is stated directly in terms of the tail profile $H_R=\pi(w>R)$ rather than the moments. This makes the bound tail-adaptive: moment assumptions are only one way to control $H_R$, and sharper model-specific tail estimates can be inserted directly. In particular, Example~\ref{ex:no-finite-moment-imh} gives a case where $\mathbb E_q[w^p]=\infty$ for every $p>1$, so finite-moment polynomial bounds of \cite{deligiannidis2024importance} do not provide a quantitative conclusion, while our tail-profile bound still yields an explicit, albeit slow, convergence rate.

Our local-SDPI viewpoint is related to, but conceptually distinct from, the classical drift-minorization framework. In that approach, one proves a minorization condition on a small set and a Lyapunov drift condition showing that the chain returns to that set sufficiently often; together, these yield convergence, typically in total variation or weighted total variation \cite{harris1956existence,nummelin1984general,meyn2009markov,rosenthal1995minorization,hairer2011yet}. The drift condition is of the form $\int V(y)\sK(x,dy)\le \lambda V(x)+b\mathbf 1_C(x)$ with $\lambda<1$ for a Lyapunov function $V$ and a small set $C$.
Our approach retains the same core-tail geometry but replaces the drift-minorization certificate by a divergence-contraction certificate: we prove a local SDPI on the core $C_R$ and show that the resulting contraction coefficient directly controls the rejection profile on that set. The final bound separates local contraction from the loss due to localization, expressed through $H_R$, and yields a convergence result in terms of $\mathsf E_\gamma$-divergence for all $\gamma\geq 1$ under warm starts. 

\paragraph{Notation.} Random variables are denoted by uppercase letters, such as $X$. We use calligraphic letters for sets, except for $\mathcal N$, which denotes a Gaussian distribution. For $n\in\mathbb N$, let $[n]:=\{1,\ldots,n\}$. The set of probability measures on a measurable space $\mathcal X$ is denoted by $\mathcal P(\mathcal X)$. A differentiable function $f:\mathbb R^d\to\mathbb R$ is $M$-smooth if $\nabla f$ is $M$-Lipschitz.  A Markov kernel $\mathsf{K}:\mathcal{K}\to\mathcal{P}(\mathcal{W})$ is specified by a collection of distributions $\{\mathsf{K}(x, \cdot)\in\mathcal{P}(\mathcal{W}):x\in\mathcal{K}\}$.  If $\sK:\mathcal X\to\mathcal P(\mathcal Y)$ is a Markov kernel and $\mu\in\mathcal P(\mathcal X)$, then $\mu\sK$ denotes the push-forward measure on $\mathcal Y$, defined by $$\mu\sK(A):=\int_{\mathcal X}\sK(x,A)\mu(dx).$$

\section{Preliminaries} \label{sec:preliminaries} 

\subsection{\texorpdfstring{$\mathsf E_\gamma$}{Egamma}-divergence, SDPI, and mixing time} 
Given a convex function $f$ satisfying $f(1)=0$, and two measures $\mu$ and $\nu$ on a measurable space $\mathcal X$  such that $\mu \ll \nu$, the $f$-divergence between $\mu$ and $\nu$ is defined as:
\begin{align}
    D_f(\mu\|\nu)\coloneqq \int \text{d}\nu f\Big(\frac{\text{d}\mu}{\text{d}\nu}\Big).
\end{align}
Commonly used instances of $f$-divergence include KL divergence $\mathsf{KL}(\mu\|\nu)$, $\chi^2$-divergence $\chi^2(\mu\|\nu)$, total variation distance $\mathsf{TV}(\mu, \nu)$, and Hellinger divergence $\mathcal{H}_\alpha(\mu\|\nu)$ of order $\alpha>1$. These measures are $f$-divergence with associated generator function $f(t)$ to be $t\log t$, $(t-1)^2$, $\frac{1}{2}|t-1|$, and $\frac{t^\alpha - 1}{\alpha - 1}$, respectively. Note that while R\'enyi divergence $D_\alpha$ of order $\alpha$ is not an $f$-divergence, it is a monotone function of $\mathcal{H}_\alpha$, that is,   $D_\alpha(\mu\|\nu)\coloneqq \frac{1}{\alpha - 1} \log\left(1+ (\alpha - 1)\mathcal{H}_\alpha(\mu \parallel \nu)\right)$.

An important instance of $f$-divergence for this work is $\sE_\gamma$-divergence (also known as hockey-stick divergence) defined as $\sE_\gamma(\mu\|\nu)\coloneqq D_{f_\gamma}(\mu\|\nu)$ where $f_\gamma(t) = (t-\gamma)_+$ for $\gamma\geq 1$. It can be verified that 
$$\mathsf E_\gamma(\mu\|\nu)=\sup_{A\subseteq\mathcal X}\{\mu(A)-\gamma\nu(A)\} = \int d(\mu-\gamma\nu)_+,$$
where the supremum is taken over measurable sets $A$ and the equality follows from the Neyman-Pearson lemma. 
Note that at $\gamma=1$, this recovers TV distance: $\mathsf E_1(\mu\|\nu)=\mathsf{TV}(\mu,\nu)$. A key reason to work with $\mathsf E_\gamma$-divergence is that it generates a broad class of $f$-divergences. If $f$ is twice differentiable with continuous second derivative, then \cite[Corollary~3.7]{cohen1998comparisons} \begin{equation}\label{f_representation} D_f(\mu\|\nu)=\int_1^\infty\left[f''(\gamma)\mathsf E_\gamma(\mu\|\nu)+\gamma^{-3}f''(\gamma^{-1})\mathsf E_\gamma(\nu\|\mu)\right]d\gamma. \end{equation} Thus, when both hockey-stick profiles are controlled, one can transfer convergence to the corresponding $f$-divergence whenever the integral is finite. 

A fundamental property of $f$-divergences is the data processing inequality (DPI), which states that each $f$-divergence contracts under Markov kernels: $D_f(\mu\sK\|\nu\sK)\le D_f(\mu\|\nu).$ This inequality can be improved for some kernels $\sK$, that is there may exist $\eta_f \leq 1$ such that $D_f(\mu\sK \| \nu\sK) \leq \eta_f D_f(\mu \| \nu)$ for any measures $\mu$ and $\nu$. The smallest such $\eta_f$ is typically referred to as the \textit{contraction coefficient} of $\sK$ under $f$-divergence and denoted by $\eta_f(\sK)$. If $\eta_f(\sK)<1$, we say $\sK$ satisfies \textit{strong} DPI (SDPI) for $f$-divergence.  In particular, for hockey-stick divergence we write 
\begin{equation}\label{eq:SDPI_gamma}
    \eta_\gamma(\sK):=\sup_{\mu\ne\nu}\frac{\mathsf E_\gamma(\mu\sK\|\nu\sK)}{\mathsf E_\gamma(\mu\|\nu)}.
\end{equation}
We refer to \cite{asoodeh2020contraction,asoodeh2023privacy,Makur_SDPI,raginsky2016strong} for background on contraction coefficients and SDPI. 

Let $\{X_k\}_{k\ge0}$ be a Markov chain with law $\mu_k$ at time $k$ and stationary distribution $\pi$. For $\gamma\ge1$ and $\varepsilon\in(0,1)$, the $\mathsf E_\gamma$-mixing time \cite{Zamanlooy2024} is defined as  $$T_{\mathrm{mix},\mathsf E_\gamma}(\varepsilon):=\min\{k\in\mathbb N:\mathsf E_\gamma(\mu_k\|\pi)\vee \sE_\gamma(\pi\|\mu_k) \le\varepsilon\}.$$ The usual total-variation mixing time is the special case $T_{\mathrm{mix},\mathsf{TV}}(\varepsilon):=T_{\mathrm{mix},\mathsf E_1}(\varepsilon)$.

\SetAlgoNoLine
\LinesNotNumbered

\begin{algorithm}[t]
\caption{Independent Metropolis--Hastings with proposal $q$}
\label{algo:imh}
\KwIn{Initial state $x_0\in\mathcal X$; target density $\pi$; proposal distribution $q$; number of iterations $k$}

\For{$t=0,\dots,k-1$}{
    Draw $y_t\sim q$\;
    Compute $\alpha(x_t,y_t):=1\wedge \frac{w(y_t)}{w(x_t)}$, where $w=d\pi/dq$\;
    Draw $U_t\sim\mathrm{Unif}[0,1]$\;
    Set $x_{t+1}\coloneqq y_t$ if $U_t\le \alpha(x_t,y_t)$, and $x_{t+1}\coloneqq x_t$ otherwise\;
}
\KwOut{Trajectory $\{x_t\}_{t=0}^k$}
\end{algorithm}

\subsection{Projected Langevin Monte Carlo} \label{Sec:PLMC} 
Let $\pi$ be a Gibbs distribution on $\mathbb R^d$ with density proportional to $\exp(-u(x))$, where $u:\mathbb R^d\to\mathbb R$ is a smooth potential. The Langevin diffusion is $$dX_t=-\nabla u(X_t)dt+\sqrt2\,dW_t,$$ where $\{W_t\}_{t\ge0}$ is a standard $d$-dimensional Brownian motion. Under mild regularity conditions on $u$, the distribution of $X_t$ converges to $\pi$ as $t \to \infty$. Applying standard Euler discretization gives Langevin Monte Carlo (LMC): 
\begin{equation}\label{eq:LMC1}
  X_{k+1}=X_k-\eta\nabla u(X_k)+\sqrt{2\eta}\,Z_k,  
\end{equation}
 where $Z_k\sim\mathcal N(0,I_d)$ and $\eta>0$ is the step size. The stationary distribution of the LMC algorithm, denoted by $\pi^\eta$, converges to $\pi$ as $\eta \to 0$; thus, we refer to $\pi^\eta$ as the \textit{biased} target distribution. 
A more general form of the discretized update in \eqref{eq:LMC1} enables handling constrained distributions via projection and large-scale finite-sum potentials $u = \sum_{i=1}^n u_i$ via stochastic gradient and arbitrary batching scheme. 
\begin{definition}[Projected Langevin Monte Carlo] \label{projectedLMC} Let $\mathcal K\subset\mathbb R^d$ be compact and convex, and let potential $u = \sum_{i=1}^nu_i$ with $u_1,\ldots,u_n:\mathcal K\to\mathbb R$ be smooth. Given a random nonempty batch $B_k\subseteq[n]$, define 
\begin{equation}\label{eq:updatePLMC}
    \psi_{B_k}(x):=x-\frac{\eta}{|B_k|}\sum_{i\in B_k}\nabla u_i(x).
\end{equation} Projected Langevin Monte Carlo (P-LMC) is the Markov chain 
\begin{equation}\label{eq:PLMC}
    X_{k+1}=\Pi_{\mathcal K}\left(\psi_{B_k}(X_k)+\sqrt{2\eta}\,Z_k\right),\qquad Z_k\sim\mathcal N(0,I_d),
\end{equation} where $\Pi_{\mathcal K}$ is the Euclidean projection onto $\mathcal K$. We denote by  $\mu_k$ the distribution of $X_k$.
\end{definition}

\subsection{Independent Metropolis--Hastings} \label{sec:MH} Independent Metropolis--Hastings (IMH) is one of the most fundamental Markov chain Monte Carlo algorithms. Given a proposal distribution $q$, the algorithm repeatedly draws a candidate state independently of the current position and then accepts or rejects it using a Metropolis correction; see Algorithm~\ref{algo:imh}. This correction guarantees that the resulting Markov chain has the desired target distribution $\pi$ as its invariant distribution. Assuming $\pi\ll q$, the IMH kernel is given as 
\begin{align*}
    \sK(x, dy) = \alpha(x, y) q(dy) + r(x) \, \delta_x(dy),
\end{align*}
where $\alpha(x, y)$ is the acceptance probability of moving from current state $x$ to $y$ and $r(x)$ is the total rejection probability:
$$\alpha(x,y):=1\wedge\frac{w(y)}{w(x)},\qquad r(x):=1-\int\alpha(x,y)q(dy),$$
and the importance weight is $w(x):=\frac{d\pi}{dq}(x).$ It can be verified that $\sK$ is reversible with invariant distribution $\pi$. A large body of classical work studies the regime in which the importance weight is uniformly bounded, yielding global minorization conditions and uniform ergodicity. More precisely, if $W:=\|w\|_\infty<\infty,$ then $\sK(x,\cdot)\ge W^{-1}\pi(\cdot)$ for every $x$, and consequently, according to the Doeblin minorization condition,   the chain contracts at rate $1-W^{-1}$. This condition is often too strong. In many applications, the proposal is a good approximation to the target on most of the target mass, while the ratio $w=d\pi/dq$ may be very large or unbounded on a small tail region.

In contrast, the focus of this paper is the substantially more challenging and practically relevant setting in which the importance weight may be unbounded. To quantify the severity of the tails of $w$, for $R\ge1$ define the core, its target mass, and its tail profile by $C_R:=\{x:w(x)\le R\}$, $h_R:=\pi(C_R),$ and  $H_R:=1-h_R$, respectively. The function $R\mapsto H_R$ measures the amount of target mass lying in regions where the proposal underestimates the target by more than a factor of $R$. It will play a central role throughout our analysis. One natural way to control the tail profile is via moment: if $M_p:=\mathbb E_q[w^p]<\infty$ for some $p>1$, then $$H_R=\pi(w>R)=\mathbb E_q[w\mathbf 1_{\{w>R\}}]\le M_pR^{-(p-1)}.$$ However, our results are formulated directly in terms of $H_R$ and do not require finite moments. This allows us to accommodate a broad range of unbounded-weight regimes, including polynomial, logarithmic, subexponential, and model-specific tail behaviors.

Our analysis relies on the assumption that we have access to a \textit{warm start}. This is a standard assumption in quantitative mixing-time analysis for MCMC; see, for example, \cite{dwivedi2019log,mangoubi2019nonconvex,chewi2021optimal}. 
\begin{definition}[Warm start] Let $L\ge 1$. We say that an initial distribution $\mu_0$ is $L$-warm with respect to $\pi$ if $\mathsf E_L(\mu_0\|\pi)=0$. When the target distribution is clear from context, we simply say that $\mu_0$ is $L$-warm. 
\end{definition} 
It is important to note that warmness is preserved by any $\pi$-invariant Markov kernel. Indeed, if $\mu_0$ is $L$-warm  and $\pi\sK=\pi$, then for every $k\ge0$, the data processing inequality implies that 
$$\sE_L(\mu_k\|\pi) = \sE_L(\mu_0\sK^k\|\pi\sK^k)\leq \sE_L(\mu_0\|\pi)= 0,$$
implying that $\mu_k$ remains $L$-warm.

\section{Global SDPI for Projected LMC}
\label{sec:plmc-global-sdpi}

This section develops a global SDPI analysis for P-LMC. The key observation is that, on a compact set, the Gaussian smoothing step contracts $\mathsf E_\gamma$-divergence uniformly over all pairs of input distributions. This yields a direct mixing-time bound under $\mathsf E_\gamma$-divergence for every $\gamma\ge 1$, without requiring convexity of the potential. We then lift the same estimates to a broad class of $f$-divergences through the integral representation of $f$-divergences in \eqref{f_representation}. 

Let $\mathcal K\subset\mathbb R^d$ be compact and convex, and write $D:=\mathsf{diam}(\mathcal K)$. Given a batch $B\subseteq[n]$, let $\psi_B:\mathcal K\to \mathbb R^d$ be the update function defined in \eqref{eq:updatePLMC}. 
Throughout this section, the batch law is denoted by $\beta_B:=\mathbb P(B_k=B)$. 
The one-step P-LMC kernel is described in \eqref{eq:PLMC} and can be decomposed into three kernels as
\begin{align}\label{Kernel_representation}
    \mathsf{K} = \Pi_\mathcal{K} \circ \mathsf{K}_G^{\sqrt{2\eta}} \circ \Psi,
\end{align}
where $\Psi\coloneqq \sum_{B\subset [n]}\beta_B\psi_B$ is the random batch-gradient kernel\footnote{By abuse of notation, a deterministic function $\psi_B$ can be viewed as a Markov kernel determined by $w\mapsto \delta_{\psi_B(w)}$. Thus, kernel $\Psi$ must be viewed as $\Psi(x,\cdot):=\sum_{B\subseteq[n]}\beta_B\delta_{\psi_B(x)}(\cdot)$.} and $\mathsf K_G^\sigma(y)=\mathcal N(y,\sigma^2 I_d)$ is the Gaussian smoothing kernel. 
Sampling without replacement with batch size $b$ corresponds to $\beta_B=\binom nb^{-1}$ for $|B|=b$, while Poisson sampling corresponds to $\beta_B=p^{|B|}(1-p)^{n-|B|}$, where $p$ is the probability of including each $i\in [n]$ in the batch. 

The decomposition in \eqref{Kernel_representation} isolates the source of contraction. The deterministic drift $\psi_B$ may expand distances when the potential is non-convex, while the projection step is only non-expansive by data processing. The strict contraction comes from the \textit{constrained} Gaussian kernel: a Gaussian kernel whose input is constrained to a compact set.   
\begin{proposition}[\citealp{asoodeh2020contraction}]
\label{prop_eta_hockey_w_constraned}
Let $\mathcal S\subset\mathbb R^d$ be compact, and let $\mathsf K_G^\sigma$ be the constrained Gaussian kernel $\mathsf K_G^\sigma(y)=\mathcal N(y,\sigma^2I_d)$ for $y\in\mathcal S$. Then
$$\eta_\gamma(\mathsf K_G^\sigma)=\theta_\gamma\Big(\frac{\mathsf{diam}(\mathcal S)}{\sigma}\Big),$$
where $$\theta_\gamma(r)\coloneqq Q\Big(\frac{\log\gamma}{r}-\frac r2\Big)-\gamma Q\Big(\frac{\log\gamma}{r}+\frac r2\Big),$$
and $Q(t):=(2\pi)^{-1/2}\int_t^\infty e^{-u^2/2}\,du$.
\end{proposition}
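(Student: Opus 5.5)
The plan is to prove the two inequalities $\eta_\gamma(\mathsf K_G^\sigma)\le\theta_\gamma(\mathsf{diam}(\mathcal S)/\sigma)$ and $\ge$ separately. The upper bound goes through a reduction to point masses, and the lower bound comes from an explicit pair of Dirac inputs at two diametrically opposite points of $\mathcal S$.

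\emph{Step 1 (reduction to Dirac pairs).} We use the known characterization of the hockey-stick contraction coefficient,
\[
\eta_\gamma(\mathsf K)=\sup_{x,x'\in\mathcal S}\mathsf E_\gamma\big(\mathsf K(x,\cdot)\|\mathsf K(x',\cdot)\big).
\]
The ``$\ge$'' direction is immediate: take $\mu=\delta_x$ and $\nu=\delta_{x'}$ with $x\neq x'$, for which $\mathsf E_\gamma(\mu\|\nu)=1$.

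For ``$\le$'', fix $\mu\neq\nu$ and write $\mu-\gamma\nu=(\mu-\gamma\nu)_+-(\mu-\gamma\nu)_-$, with positive part of mass $e:=\mathsf E_\gamma(\mu\|\nu)$. The negative part then has mass $e+\gamma-1\ge e$. Split the negative part as $\gamma\nu'+(\gamma-1)\nu''$ with suitable sub-probability pieces, so that $\mu\mathsf K-\gamma\nu\mathsf K$ is written as $e$ times a difference of the form $\mu_1\mathsf K-\gamma\nu_1\mathsf K$, plus a nonpositive remainder. Here $\mu_1,\nu_1$ are probability measures with disjoint supports. Then
\[
\mathsf E_\gamma(\mu\mathsf K\|\nu\mathsf K)\le e\,\mathsf E_\gamma(\mu_1\mathsf K\|\nu_1\mathsf K).
\]
By joint convexity of $\mathsf E_\gamma$, this last term is at most $\sup_{x,x'}\mathsf E_\gamma(\mathsf K(x)\|\mathsf K(x'))$, applied to a coupling of $\mu_1$ and $\nu_1$.

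This mass-splitting is the delicate step, because $\gamma>1$ makes the bookkeeping asymmetric. I expect it to be the main obstacle. If it gets messy, I would instead cite the corresponding lemma from \cite{asoodeh2020contraction}, on which the statement is attributed.

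\emph{Step 2 (Gaussian computation).} For $x,x'\in\mathcal S$ with $r=\|x-x'\|/\sigma$, rotation and translation invariance reduce the quantity to a one-dimensional problem:
\[
\mathsf E_\gamma\big(\mathcal N(x,\sigma^2I)\|\mathcal N(x',\sigma^2I)\big)=\mathsf E_\gamma\big(\mathcal N(r,1)\|\mathcal N(0,1)\big).
\]
By Neyman--Pearson, the optimal set is $\{t: e^{rt-r^2/2}>\gamma\}=\{t>\log\gamma/r+r/2\}$. Evaluating $P(t>\cdot)$ under each Gaussian gives
\[
Q\Big(\frac{\log\gamma}{r}-\frac r2\Big)-\gamma Q\Big(\frac{\log\gamma}{r}+\frac r2\Big)=\theta_\gamma(r).
\]

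\emph{Step 3 (monotonicity and the supremum).} We show that $\theta_\gamma$ is nondecreasing in $r$. This holds because the Gaussian family with larger mean separation is obtained from one with smaller separation by post-processing, namely scaling followed by added noise, so the DPI applies. A direct alternative is to compute $\theta_\gamma'(r)=\varphi(\log\gamma/r-r/2)\ge 0$, using the identity $\gamma\varphi(a+r/2)=\varphi(a-r/2)$ with $a=\log\gamma/r$.

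Hence the supremum over $x,x'\in\mathcal S$ equals $\theta_\gamma$ evaluated at $\sup\|x-x'\|/\sigma=\mathsf{diam}(\mathcal S)/\sigma$. Compactness of $\mathcal S$ and continuity of $\theta_\gamma$ ensure this supremum is attained at a diametral pair. Combining Steps 1--3 gives the stated equality.
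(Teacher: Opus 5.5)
Your proposal is correct and matches the argument behind the cited result: the paper gives no proof of its own, but relies on the same two-point characterization $\eta_\gamma(\mathsf K)=\sup_{x,x'}\mathsf E_\gamma(\mathsf K(x,\cdot)\|\mathsf K(x',\cdot))$ from \cite[Theorem~2]{asoodeh2020contraction} (invoked in Appendix~\ref{app:MH_SDPI_relationship}), followed by the Neyman--Pearson computation and monotonicity in $r$; your mass split is also not actually delicate, since with $\nu_1$ the normalized negative part of $\mu-\gamma\nu$ you can take $\nu'=e\nu_1$ and $\nu''=(1-e)\nu_1$, which is legitimate because $e\le 1$. The only slip is that your DPI justification of monotonicity has the direction reversed (it is the pair with the \emph{smaller} separation $r<r'$ that is a post-processing of the larger one, via scaling by $r/r'$ and adding independent $\mathcal N(0,1-r^2/r'^2)$ noise), but your derivative formula $\theta_\gamma'(r)=\varphi(\log\gamma/r-r/2)$ is correct and settles that step on its own.
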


We now state the main result of this section. It turns the one-step contraction coefficient of the constrained Gaussian kernel from the previous proposition into a multi-step convergence bound for P-LMC. The rate is governed by the diameter of the drifted image $\psi_B(\mathcal K)$ before Gaussian smoothing, and therefore captures how the random-batch sampling scheme affects the geometry seen by the Gaussian step.

\begin{theorem}\label{thm:avg-egamma-general}
Assume that each $u_i$ is $M_i$-smooth on $\mathcal K$, and let $\pi^\eta$ be the invariant distribution of the P-LMC kernel and $\mu_k$ be the distribution of its $k$th iterate. Then,  
for every $\gamma\ge 1$ and every $k\ge 0$, we have 
$$\max\{\mathsf{E}_\gamma(\mu_{k} \| \pi^\eta), \mathsf{E}_\gamma(\pi^\eta\| \mu_{k})\}  \le \rho^k_{\gamma,\beta},$$
where 
$$\rho_{\gamma,\beta} \coloneqq \sum_{B} \beta_B \theta_{\gamma}\Big(\frac{D(\eta M_B + 1)}{\sqrt{2\eta}}\Big),$$
and $M_B\coloneqq |B|^{-1}\sum_{i\in B}M_i$ is the smoothness constant associated with the batch gradient map $\psi_B$.
\end{theorem}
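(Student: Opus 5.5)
The plan is to show that one step of the P-LMC kernel $\sK$ contracts $\mathsf E_\gamma$ by the factor $\rho_{\gamma,\beta}$ uniformly over input pairs supported on $\mathcal K$, i.e.\ $\eta_\gamma(\sK)\le\rho_{\gamma,\beta}$, and then iterate. Since $\pi^\eta$ is invariant, $\mu_k=\mu_0\sK^k$ and $\pi^\eta=\pi^\eta\sK^k$. Applying the one-step bound $k$ times to the pair $(\mu_0,\pi^\eta)$ and to the pair $(\pi^\eta,\mu_0)$ gives
$$\mathsf E_\gamma(\mu_k\|\pi^\eta)\le\rho_{\gamma,\beta}^k\,\mathsf E_\gamma(\mu_0\|\pi^\eta)\le\rho_{\gamma,\beta}^k,$$
and the same bound with the arguments reversed. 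Here we use the trivial bound $\mathsf E_\gamma\le \mathsf{TV}\le 1$ for $\gamma\ge1$. Both measures live on $\mathcal K$ at every step because of the projection, so the one-step bound applies at each iteration.

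For the one-step bound, I would use the decomposition \eqref{Kernel_representation} together with a mixture argument over batches. Writing $\sK=\sum_B\beta_B\sK_B$ with $\sK_B:=\Pi_{\mathcal K}\circ\mathsf K_G^{\sqrt{2\eta}}\circ\psi_B$, joint convexity of the $f$-divergence $\mathsf E_\gamma$ gives
$$\mathsf E_\gamma(\mu\sK\|\nu\sK)\le\sum_B\beta_B\,\mathsf E_\gamma(\mu\sK_B\|\nu\sK_B).$$
This requires the same batch variable on both sides, which is legitimate since $B_k$ is independent of $X_k$. For a fixed $B$, the data processing inequality removes the projection $\Pi_{\mathcal K}$. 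Proposition~\ref{prop_eta_hockey_w_constraned} is then applied with $\mathcal S=\psi_B(\mathcal K)$ and $\sigma=\sqrt{2\eta}$, since the inputs $\mu\psi_B^{-1}$ and $\nu\psi_B^{-1}$ are supported on this set. It gives
$$\mathsf E_\gamma(\mu\sK_B\|\nu\sK_B)\le\theta_\gamma\big(\mathsf{diam}(\psi_B(\mathcal K))/\sqrt{2\eta}\big)\,\mathsf E_\gamma(\mu\psi_B^{-1}\|\nu\psi_B^{-1}).$$
Applying DPI once more to the deterministic map $\psi_B$ bounds the last factor by $\mathsf E_\gamma(\mu\|\nu)$.

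It remains to bound the diameter of the drifted image. For $x,y\in\mathcal K$, the triangle inequality and $M_i$-smoothness give
$$\|\psi_B(x)-\psi_B(y)\|\le\|x-y\|+\tfrac{\eta}{|B|}\sum_{i\in B}\|\nabla u_i(x)-\nabla u_i(y)\|\le(1+\eta M_B)\|x-y\|,$$
so $\mathsf{diam}(\psi_B(\mathcal K))\le D(1+\eta M_B)$. Compactness of $\psi_B(\mathcal K)$ follows from continuity of $\psi_B$, which holds because smoothness implies continuity of the gradients. Finally, I need $\theta_\gamma$ to be nondecreasing in $r$, which lets me replace the exact diameter by this upper bound. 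This is immediate because $\theta_\gamma(r)=\mathsf E_\gamma(\mathcal N(r,1)\|\mathcal N(0,1))$ and Gaussian location families are ordered, or alternatively because it is the contraction coefficient over a growing constraint set. Summing the per-batch bounds gives $\eta_\gamma(\sK)\le\rho_{\gamma,\beta}$.

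The main obstacle is careful bookkeeping rather than a hard estimate. Two points need checking. First, convexity must be applied correctly to the batch mixture: one should not first form $\Psi$ as a single kernel, because then the image set would be the union $\bigcup_B\psi_B(\mathcal K)$ and we would lose the averaging over batches. Second, Proposition~\ref{prop_eta_hockey_w_constraned} is a statement about the kernel restricted to inputs in $\mathcal S$, and the pushforward measures $\mu\psi_B^{-1}$ and $\nu\psi_B^{-1}$ must genuinely be supported on $\psi_B(\mathcal K)$. This holds because $\mu$ and $\nu$ are supported on $\mathcal K$ after any projection step; for $\mu_0$ this is an implicit assumption.
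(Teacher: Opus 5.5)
Your proposal is correct and follows essentially the same route as the paper's proof. The steps match: DPI to remove the projection, joint convexity over the batch mixture, the constrained Gaussian SDPI of Proposition~\ref{prop_eta_hockey_w_constraned} on $\psi_B(\mathcal K)$, DPI through $\psi_B$, the Lipschitz bound $\mathsf{diam}(\psi_B(\mathcal K))\le D(1+\eta M_B)$, and iteration with $\mathsf E_\gamma\le 1$. The only differences are cosmetic: you split over batches before dropping the projection. You also make explicit two points the paper uses tacitly, namely that $r\mapsto\theta_\gamma(r)$ is monotone and that $\psi_B(\mathcal K)$ is compact and carries the pushforward measures.
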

\begin{proof}[Proof sketch]
By stationarity of $\pi^\eta$ and convexity of $(\mu,\nu)\mapsto\mathsf E_\gamma(\mu\|\nu)$ (as for any $f$-divergence), we can write 
$$\mathsf E_\gamma(\mu_{k+1}\|\pi^\eta)\le \sum_{B\subset [n]}\beta_B\,\mathsf E_\gamma\big(\mu_k(\mathsf K_G^{\sqrt{2\eta}}\circ\psi_B)\|\pi^\eta(\mathsf K_G^{\sqrt{2\eta}}\circ\psi_B)\big).$$
For a fixed batch $B$, the image of the drift map is $\mathcal S_B:=\psi_B(\mathcal K)$. Since $\psi_B$ is $(1+\eta M_B)$-Lipschitz, we have   
$\mathsf{diam}(\mathcal S_B)\le D(1+\eta M_B).$
Applying Proposition~\ref{prop_eta_hockey_w_constraned} to the Gaussian kernel restricted to $\mathcal S_B$, and using monotonicity of $r\mapsto\theta_\gamma(r)$, gives
$$\mathsf E_\gamma\big(\mu_k(\mathsf K_G^{\sqrt{2\eta}}\circ\psi_B)\|\pi^\eta(\mathsf K_G^{\sqrt{2\eta}}\circ\psi_B)\big)\le \theta_\gamma\Big(\frac{D(1+\eta M_B)}{\sqrt{2\eta}}\Big)\mathsf E_\gamma(\mu_k\|\pi^\eta).$$
Averaging over $B$ yields $\mathsf E_\gamma(\mu_{k+1}\|\pi^\eta)\le\rho_{\gamma,\beta}\mathsf E_\gamma(\mu_k\|\pi^\eta)$. Iterating and using $\mathsf E_\gamma(\mu_0\|\pi^\eta)\le 1$ proves the claim.
\end{proof}
Theorem~\ref{thm:avg-egamma-general} establishes exponential convergence in $\mathsf E_\gamma$-divergence between the law of P-LMC and its stationary distribution $\pi^\eta$ under smoothness alone, without requiring convexity of the potential. The compact projection set is essential for this global argument: it keeps the pre-noise image $\psi_B(\mathcal K)$ bounded, which turns Gaussian smoothing into a global contractive kernel. The theorem also makes the role of the sampling scheme explicit through an average of batch-specific contraction coefficients, rather than only through a worst-case smoothness bound.

When the potentials are convex, the drift map itself becomes non-expansive for a suitable step size. Indeed, if $g$ is convex and $M$-smooth, then $x\mapsto x-\eta\nabla g(x)$ is non-expansive for $0\le\eta\le 2/M$. Consequently, in the convex case one obtains $\mathsf{diam}(\psi_B(\mathcal K))\le D$, improving the non-convex diameter bound $D(1+\eta M_B)$. This refined diameter estimate leads to a sharper convergence bound; see Appendix~\ref{app:convex} for details.

A useful worst-case consequence follows by replacing all batch smoothness constants by $M:=\max_i M_i$.
\begin{corollary}\label{corollary_Egamma_PLMC}
Assume the conditions of Theorem~\ref{thm:avg-egamma-general} and set $M:=\max_i M_i$. Then, for every $\gamma\ge 1$ and every $k\ge 0$,
$$\max\{\mathsf E_\gamma(\mu_k\|\pi^\eta),\mathsf E_\gamma(\pi^\eta\|\mu_k)\}\le \left[\theta_\gamma\Big(\frac{D(1+\eta M)}{\sqrt{2\eta}}\Big)\right]^k.$$
Consequently, for $0<\varepsilon<1$,
$$T_{\mathrm{mix},\mathsf E_\gamma}(\varepsilon)\le \frac{\log\varepsilon}{\log\theta_\gamma\Big(\frac{D(1+\eta M)}{\sqrt{2\eta}}\Big)}.$$
\end{corollary}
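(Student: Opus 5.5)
The plan is to derive Corollary~\ref{corollary_Egamma_PLMC} from Theorem~\ref{thm:avg-egamma-general} by bounding each batch-specific term in $\rho_{\gamma,\beta}$ by the worst case. Since $M_B=|B|^{-1}\sum_{i\in B}M_i$ is an average of the $M_i$ over a nonempty batch, we have $M_B\le M=\max_i M_i$ for every $B$. Hence
\[
\frac{D(\eta M_B+1)}{\sqrt{2\eta}}\le \frac{D(\eta M+1)}{\sqrt{2\eta}} \quad\text{for every } B.
\]

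The key ingredient is monotonicity of $r\mapsto\theta_\gamma(r)$ on $(0,\infty)$. The proof sketch of Theorem~\ref{thm:avg-egamma-general} already uses this fact, but I would justify it directly. One route is structural: $\theta_\gamma(r)$ equals $\mathsf E_\gamma(\mathcal N(r,1)\|\mathcal N(0,1))$. The Gaussian with mean $r'>r$ can be mapped to the one with mean $r$ by a common contraction, namely the pair $(\mathcal N(r',1),\mathcal N(0,1))$ is $(\mathcal N(r,1),\mathcal N(0,1))$ followed by $x\mapsto (r'/r)x$ plus added noise. Alternatively, and more cleanly, I would differentiate. With $a=\log\gamma/r$, set $t_\pm=a\pm r/2$. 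The derivative is
\[
\theta_\gamma'(r)=-\phi(t_-)\,t_-'+\gamma\phi(t_+)\,t_+'.
\]
The identity $\gamma\phi(t_+)=\phi(t_-)$ follows since $t_+^2-t_-^2=2\log\gamma$. This gives $\theta_\gamma'(r)=\phi(t_-)(t_+'-t_-')=\phi(t_-)>0$. I expect this to be the only step with any real content, and it is a short calculation.

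With monotonicity in hand, each summand satisfies $\theta_\gamma\bigl(D(\eta M_B+1)/\sqrt{2\eta}\bigr)\le\theta_\gamma\bigl(D(1+\eta M)/\sqrt{2\eta}\bigr)$. Because $\sum_B\beta_B=1$, it follows that $\rho_{\gamma,\beta}\le\theta_\gamma\bigl(D(1+\eta M)/\sqrt{2\eta}\bigr)=:\bar\theta$. Substituting this into Theorem~\ref{thm:avg-egamma-general} gives the first display,
\[
\max\{\mathsf E_\gamma(\mu_k\|\pi^\eta),\mathsf E_\gamma(\pi^\eta\|\mu_k)\}\le\bar\theta^{\,k}.
\]

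For the mixing-time bound, I would first observe that $\bar\theta<1$. Here $\theta_\gamma(r)$ is a hockey-stick divergence between two distinct Gaussians with finite mean separation, so it lies in $[0,1)$; it is strictly below $1$ because $Q(\cdot)<1$ and the subtracted term is nonnegative. If $\bar\theta=0$, the bound is trivial. Otherwise $\log\bar\theta<0$, and $\bar\theta^{\,k}\le\varepsilon$ holds as soon as $k\ge \log\varepsilon/\log\bar\theta$. By the definition of $T_{\mathrm{mix},\mathsf E_\gamma}(\varepsilon)$ as the least such integer, this yields the stated bound, up to the ceiling implicit in the statement.
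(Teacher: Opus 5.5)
Your proposal is correct and follows the paper's own route. You bound $M_B\le M$, use monotonicity of $r\mapsto\theta_\gamma(r)$ to get $\rho_{\gamma,\beta}\le\theta_\gamma\big(D(1+\eta M)/\sqrt{2\eta}\big)$, substitute into Theorem~\ref{thm:avg-egamma-general}, and take logarithms; your computation $\theta_\gamma'(r)=\phi(t_-)>0$ makes explicit a monotonicity the paper only asserts.

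The only slip is in the structural aside, which has the direction reversed. The map should send mean $r'$ to mean $r$ via $x\mapsto (r/r')x$ plus independent $\mathcal N(0,1-(r/r')^2)$ noise, which gives $\theta_\gamma(r)\le\theta_\gamma(r')$; your proof does not depend on this aside.
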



This simplified bound no longer distinguishes among batching schemes. It is therefore best interpreted as a compact global guarantee rather than a sharp description of the effect of stochastic gradients. The bound in Theorem~\ref{thm:avg-egamma-general} is more informative when different batches induce substantially different drift diameters.
\begin{remark}
    The average coefficient in Theorem~\ref{thm:avg-egamma-general} can be substantially sharper than the worst-case bound when the batch smoothness constants are heterogeneous. To illustrate this, consider the following example on a non-convex double-well potential. Let $\mathcal K=[a,b]$, where $a<b$, and define $u(x)\coloneqq c(z^2-1/4)^2$ with $z\coloneqq \frac{x-m}{s}$ and $c>0$, where $m\coloneqq \frac{a+b}{2}$ and $s\coloneqq \frac{b-a}{2}$. Let $u_i(x)=w_i u(x)$, where $w_i\ge0$ and $\sum_{i=1}^n w_i=1$. A direct calculation shows that $u$ is $L$-smooth with $L=44c/(b-a)^2$, hence $u_i$ is $M_i$-smooth with $M_i=w_iL$.
Table~\ref{tab:tv-poisson-swr} compares the average-case SDPI bound from Theorem~\ref{thm:avg-egamma-general} with the worst-case bound from Corollary~\ref{corollary_Egamma_PLMC}. We take $n=12$, $a=-1$, $b=1$, $c=0.1$, $\eta=0.15$, $w_1=0.8$, and $w_i=0.2/(n-1)$ for $i=2,\ldots,n$. For Poisson sampling we take $p=0.2$, and for sampling without replacement (SwR) we take batch size $b=2$.
\end{remark}
\begin{table}[t]
\centering
\caption{Average-case and worst-case SDPI bounds for Poisson sampling and sampling without replacement.}
\label{tab:tv-poisson-swr}
\begin{tabular}{|c|c|c|c|}
\hline
$k$ & Poisson average-case & SwR average-case & worst-case\\
\hline
1  & 0.650 & 0.651 & 0.751 \\
\hline 
5  & 0.116 & 0.117 & 0.239\\
\hline 
10 & 0.013 & 0.013 & 0.057\\
\hline 
15 & 0.001 & 0.001 & 0.013\\
\hline 
20 & $1.835\times 10^{-4}$ & $1.885\times 10^{-4}$ & 0.003 \\
\hline
\end{tabular}
\end{table}

Setting $\gamma=1$ in Corollary~\ref{corollary_Egamma_PLMC} gives the corresponding TV estimate, since $\mathsf E_1=\mathsf{TV}$ and $\theta_1(r)=1-2Q(r/2)$.

\begin{corollary}\label{corollary_TV_PLMC}
Under the assumptions of Corollary~\ref{corollary_Egamma_PLMC}, we have for $0<\varepsilon<1$,
$$T_{\mathrm{mix},\mathsf{TV}}(\varepsilon)\le \frac{\log\varepsilon}{\log\Big(1-2Q\Big(\frac{D(1+\eta M)}{2\sqrt{2\eta}}\Big)\Big)}.$$
\end{corollary}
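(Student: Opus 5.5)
The plan is to specialize Corollary~\ref{corollary_Egamma_PLMC} to $\gamma=1$ and then evaluate $\theta_1$ in closed form. Since $\mathsf E_1=\mathsf{TV}$ and $\mathsf{TV}$ is symmetric, the quantity $\mathsf E_1(\mu_k\|\pi^\eta)\vee\mathsf E_1(\pi^\eta\|\mu_k)$ appearing in the definition of $T_{\mathrm{mix},\mathsf E_1}$ is simply $\mathsf{TV}(\mu_k,\pi^\eta)$. Therefore $T_{\mathrm{mix},\mathsf{TV}}(\varepsilon)=T_{\mathrm{mix},\mathsf E_1}(\varepsilon)$, and the mixing-time bound of Corollary~\ref{corollary_Egamma_PLMC} holds with $\gamma=1$.

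The main step is to compute $\theta_1$ from Proposition~\ref{prop_eta_hockey_w_constraned}. At $\gamma=1$ we have $\log\gamma=0$, so
\[
\theta_1(r)=Q\Big(-\frac r2\Big)-Q\Big(\frac r2\Big).
\]
Symmetry of the standard Gaussian gives $Q(-t)=1-Q(t)$, hence $\theta_1(r)=1-2Q(r/2)$. Substituting $r=D(1+\eta M)/\sqrt{2\eta}$ turns the denominator $\log\theta_1(r)$ of Corollary~\ref{corollary_Egamma_PLMC} into
\[
\log\Big(1-2Q\Big(\frac{D(1+\eta M)}{2\sqrt{2\eta}}\Big)\Big),
\]
which is exactly the claimed expression.

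The only point that needs care is the sign of the ratio. For $r>0$ we have $Q(r/2)\in(0,1/2)$, so $\theta_1(r)\in(0,1)$ and its logarithm is strictly negative. Because $\log\varepsilon<0$ as well, the ratio is positive and the bound is meaningful. It follows from $\theta_1^k\le\varepsilon \iff k\ge \log\varepsilon/\log\theta_1$. The minimal integer $k$ is the ceiling of this ratio; I will state the bound in the same sense that Corollary~\ref{corollary_Egamma_PLMC} does, namely up to rounding. No real obstacle is expected beyond this bookkeeping.
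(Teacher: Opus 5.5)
Your proposal is correct and follows the paper's route exactly: specialize Corollary~\ref{corollary_Egamma_PLMC} to $\gamma=1$, use $\mathsf E_1=\mathsf{TV}$, and evaluate $\theta_1(r)=Q(-r/2)-Q(r/2)=1-2Q(r/2)$. Your remark that the bound is meant up to taking the ceiling applies equally to the paper's own statement of Corollary~\ref{corollary_Egamma_PLMC}.
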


The quality of this mixing time bound is governed by the ratio between the diameter of the drifted set and the Gaussian smoothing scale. When $\mathsf{diam}(\psi_B(\mathcal K))$ is comparable to or smaller than $\sqrt\eta$, Gaussian smoothing produces a substantial one-step contraction. When the drifted set is much larger than the noise scale, the theorem still gives exponential decay, but the numerical rate becomes weak. This is because the only strict contraction in the argument comes from the Gaussian step, and the Gaussian contraction coefficient necessarily deteriorates as the pre-noise diameter grows.

\subsection{From hockey-stick mixing to $f$-divergence mixing}
\label{subsec:plmc-fdiv}
Theorem~\ref{thm:avg-egamma-general} and Corollary~\ref{corollary_Egamma_PLMC} control the full hockey-stick divergence profiles $\{\mathsf E_\gamma(\mu_k\| \pi^\eta):\gamma\ge 1\}$ and $\{\mathsf E_\gamma(\pi^\eta\|\mu_k):\gamma\ge 1\}$. Combining these results with the integral representation of general $f$-divergence in terms of $\sE_\gamma$-divergence in \eqref{f_representation} enables us to derive convergence bounds for a broad class of $f$-divergences. 
\begin{theorem}\label{Theorem_fdivergence_PLMC}
Let $f:(0,\infty)\to\mathbb R$ be twice differentiable and convex, with continuous second derivative and $f(1)=0$. Set $r\coloneqq D(1+\eta M)/\sqrt{2\eta}$ and $s\coloneqq \exp(r^2/2+r)$. 
Assume that there exist constants $L,N<\infty$ and an integer $K\ge 1$ such that, $t^{-2}f''(t^{-1})\le L,$ and $t^{1-K}f''(t)\le N$ for all $t\ge s$. Then, we have for every $k\ge K$,
$$D_f(\mu_k\|\pi^\eta)\le \left[f'(s)-s^{-1}f'(s^{-1})+f(s^{-1})\right]\theta_1(r)^k+\frac{r\big(L+Ne^{Kr^2}\big)}{k-1}(2\pi)^{-k/2}.$$
\end{theorem}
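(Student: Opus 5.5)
The plan is to start from the integral representation \eqref{f_representation} and split the $\gamma$-integral at the threshold $s=\exp(r^2/2+r)$:
\begin{align*}
D_f(\mu_k\|\pi^\eta)=\int_1^s\big[\cdots\big]d\gamma+\int_s^\infty\big[\cdots\big]d\gamma .
\end{align*}
Both hockey-stick profiles are controlled by Corollary~\ref{corollary_Egamma_PLMC}, which gives $\mathsf E_\gamma(\mu_k\|\pi^\eta)\vee\mathsf E_\gamma(\pi^\eta\|\mu_k)\le\theta_\gamma(r)^k$.

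On the bounded range $[1,s]$, use that $\gamma\mapsto\theta_\gamma(r)$ is nonincreasing. This follows from the definition as a supremum of $P(A)-\gamma Q(A)$, or directly from $\mathsf E_\gamma$ being nonincreasing in $\gamma$. Hence $\theta_\gamma(r)^k\le\theta_1(r)^k$ there. The remaining factor is $\int_1^s[f''(\gamma)+\gamma^{-3}f''(\gamma^{-1})]d\gamma$. Integrate it exactly using
\begin{align*}
\tfrac{d}{d\gamma}\big[\gamma^{-1}f'(\gamma^{-1})-f(\gamma^{-1})\big]=\gamma^{-3}f''(\gamma^{-1}).
\end{align*}
This gives $f'(s)-f'(1)-s^{-1}f'(s^{-1})+f(s^{-1})+f'(1)-f(1)$. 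Since $f(1)=0$ and the $f'(1)$ terms cancel, we get precisely the bracket $f'(s)-s^{-1}f'(s^{-1})+f(s^{-1})$. Both integrands are nonnegative by convexity, so the first term of the statement follows.

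On the tail $[s,\infty)$, I would bound $\theta_\gamma(r)\le Q\big(\frac{\log\gamma}{r}-\frac r2\big)$ by discarding the negative term. For $\gamma\ge s$ we have $\frac{\log\gamma}{r}-\frac r2\ge 1$, and then the Gaussian tail bound $Q(t)\le\frac{1}{\sqrt{2\pi}\,t}e^{-t^2/2}\le(2\pi)^{-1/2}e^{-t^2/2}$ applies. Substitute $t=\frac{\log\gamma}{r}-\frac r2$, so that $\gamma=e^{rt+r^2/2}$ and $d\gamma=r\gamma\,dt$. The hypotheses then give:
\begin{itemize}
\item the reverse term $\gamma^{-3}f''(\gamma^{-1})\le L\gamma^{-1}$;
\item the forward term $f''(\gamma)\le N\gamma^{K-1}$.
\end{itemize}
The tail integral is therefore at most
\begin{align*}
(2\pi)^{-k/2}\,r\int_1^\infty\big(L+N\gamma^{K}\big)e^{-kt^2/2}\,dt .
\end{align*}
Here $\gamma^K=e^{Krt+Kr^2/2}$. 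Completing the square in $-kt^2/2+Krt$, and using $k\ge K$, should let me bound $\gamma^K e^{-kt^2/2}$ by $e^{Kr^2}e^{-(k-1)t^2/2}$ up to an $e^{-t}$-type factor. Finally I would use $\int_1^\infty e^{-(k-1)t^2/2}dt\le\int_1^\infty e^{-(k-1)t/2}\,dt$-type estimates, or simply $\le\frac{1}{k-1}$ via $t^2\ge t$, to obtain the factor $\frac{1}{k-1}$ together with the constant $e^{Kr^2}$.

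The main obstacle is this last bookkeeping step. The goal is to absorb the polynomial growth $\gamma^K$ of $f''$ into the Gaussian decay $e^{-kt^2/2}$ with exactly the claimed constant $Ne^{Kr^2}$ and prefactor $r/(k-1)$. I expect some slack: for example, using $Krt\le\frac{t^2}{2}+\frac{K^2r^2}{2}$, or $Krt\le Kr^2/2+\ldots$ together with $t\ge1$. I would first try the crude inequality $e^{Krt}\le e^{t^2/2}e^{K^2r^2/2}$. If that gives the wrong exponent, I would instead exploit $t\ge1$ and $k\ge K$ so that one factor of $e^{-t^2/2}$ per unit of $K$ absorbs $e^{rt}$ up to $e^{r^2}$. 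Only the reverse-direction contribution $L$ needs no such care.
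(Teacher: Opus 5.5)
Your handling of $[1,s]$ is fine. Bounding $\mathsf E_\gamma(\mu_k\|\pi^\eta)\le\mathsf E_1(\mu_k\|\pi^\eta)\le\theta_1(r)^k$ directly is a cleaner substitute for Lemma~\ref{monotonetheta}. One slip: $\frac{d}{d\gamma}\big[\gamma^{-1}f'(\gamma^{-1})-f(\gamma^{-1})\big]=-\gamma^{-3}f''(\gamma^{-1})$, so the sign of your antiderivative is flipped, although the bracket you evaluate is correct.

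The genuine gap is in the tail, and it comes from your first move there: you weaken $Q(t)\le p(t)/t$ to $Q(t)\le p(t)$, with $p$ the standard normal density. The discarded factor $t^{-k}$ is exactly what produces $\frac{1}{k-1}$ in the statement. Without it, completing the square turns the $N$-part of your tail bound into $Ne^{Kr^2}\int_1^\infty e^{-K(t-r)^2/2}e^{-(k-K)t^2/2}\,dt$. At $k=K$, which the theorem allows, this equals $Ne^{Kr^2}\sqrt{2\pi/K}\,Q\big(\sqrt K(1-r)\big)$. As $r\to\infty$ this is asymptotically $Ne^{Kr^2}\sqrt{2\pi/K}$, which exceeds $Ne^{Kr^2}/(K-1)$ for every $K\ge2$. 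So the integral you are left with is genuinely larger than the claimed second term, and no bookkeeping can recover the statement.

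Each of your candidate fixes fails in this way for $K\ge2$:
\begin{itemize}
\item The inequality $\gamma^Ke^{-kt^2/2}\le e^{Kr^2}e^{-(k-1)t^2/2}$ is false at $t=r$ (when $r\ge1$). There the left side is $e^{(3K-k)r^2/2}$ and the right side is $e^{(2K-k+1)r^2/2}$.
\item The bound $e^{Krt}\le e^{t^2/2}e^{K^2r^2/2}$ gives the constant $e^{(K^2+K)r^2/2}$ rather than $e^{Kr^2}$.
\item Absorbing $e^{rt}$ into one factor $e^{-t^2/2}$ per unit of $K$ leaves only $e^{-(k-K)t^2/2}$. That gives $e^{Kr^2}/(k-K)$, which blows up at $k=K$.
\end{itemize}
Only for $K=1$ does your first option go through.

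The missing idea is to keep the Mills-ratio factor. With $Q(t)\le p(t)/t$ and $\gamma=e^{rt+r^2/2}$, $d\gamma=r\gamma\,dt$, the tail is at most $r(2\pi)^{-k/2}\int_1^\infty\big[L\,(e^{-t^2/2}/t)^k+Ne^{Kr^2}(e^{-(t-r)^2/2}/t)^K(e^{-t^2/2}/t)^{k-K}\big]dt$. The hypothesis $k\ge K$ is used only to split the exponent this way. Now bound every exponential by $1$ and use $\int_1^\infty t^{-k}\,dt=\frac{1}{k-1}$ (for $k\ge2$). In this argument, which is the paper's, the Gaussian decay is spent entirely on cancelling the growth $\gamma^K$ of $f''$ at the price $e^{Kr^2}$. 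The polynomial factor $t^{-k}$, not the Gaussian decay, then supplies the $1/(k-1)$.
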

This theorem shows that the hockey-stick convergence bound is not merely a TV estimate in disguise.
Once both $\mathsf E_\gamma$ profiles are controlled, convergence transfers to many $f$-divergences. Note that the assumptions on $f''$ are mild polynomial growth conditions and hold for standard divergences used in literature. For example, they hold with $N = L = 1$ for KL divergence, $N = L = 2$ for $\chi^2$-divergence, and with $N = L = \alpha$ for the $\mathcal{H}_\alpha$-divergence.
Recall that while R\'enyi divergence is not an $f$-divergence, it is a monotone function of $\mathcal{H}_\alpha$-divergence, thus Theorem~\ref{Theorem_fdivergence_PLMC} can be used to derive convergence bound under R\'enyi divergence too. 

\begin{corollary}
\label{corollary_KL_chi_PLMC}
Let $r=D(1+\eta M)/\sqrt{2\eta}$ and $s=\exp(r^2/2+r)$. Under the assumptions of Theorem~\ref{thm:avg-egamma-general}, the following bounds hold.
\begin{itemize}[leftmargin=*]
\item For every $k\ge 2$, we have 
$$\mathsf{KL}(\mu_k\|\pi^\eta)\le \left(\frac{r^2}{2}+r+1-s^{-1}\right)\theta_1(r)^k+\frac{r(1+e^{r^2})}{k-1}(2\pi)^{-k/2},$$
and $$\chi^2(\mu_k\|\pi^\eta)\le \left(2s-1-s^{-2}\right)\theta_1(r)^k+\frac{2r(1+e^{r^2})}{k-1}(2\pi)^{-k/2}.$$
\item For every $k \ge \lceil \alpha\rceil$, we have
        \begin{align}
            D_{\alpha}(\mu_{k} \| \pi^\eta) \leq&\frac{1}{\alpha - 1} \log \Bigg[ \frac{\alpha r \big(1 + e^{\lceil \alpha-1 \rceil r^2}\big)}{(k-1)(\alpha-1)^{-1}} (2\pi)^{-k/2} + \Big(\alpha s^{\alpha-1}-1 -\frac{\alpha - 1}{s^{\alpha}}\Big) \Big( Q\big(\frac{-r}{2}\big) \Big)^k+ 1\Bigg]\nonumber.
        \end{align}
\end{itemize}
\end{corollary}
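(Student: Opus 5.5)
The plan is to obtain Corollary~\ref{corollary_KL_chi_PLMC} as a direct specialization of Theorem~\ref{Theorem_fdivergence_PLMC}, after computing the constants $L,N,K$ for each generator. For KL we take $f(t)=t\log t$, so $f''(t)=1/t$. Then $t^{-2}f''(t^{-1})=t^{-1}\le 1$ for $t\ge s\ge 1$, and $t^{1-K}f''(t)=t^{-K}\le 1$. Hence $L=N=1$, and we may take $K=2$, which gives the range $k\ge 2$. For $\chi^2$ we take $f(t)=(t-1)^2$, so $f''\equiv 2$. Then $t^{-2}f''(t^{-1})=2t^{-2}\le 2$, and with $K=2$ we get $t^{-1}\cdot 2\le 2$. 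Thus $L=N=2$. The remaining term $\tfrac{r(L+Ne^{Kr^2})}{k-1}(2\pi)^{-k/2}$ then reads $r(1+e^{2r^2})/(k-1)$ times $(2\pi)^{-k/2}$ with $K=2$. The stated $e^{r^2}$ suggests that $K=1$ is actually admissible, or that the authors use a slightly sharper tail estimate. I would first check whether $K=1$ works: for KL, $t^{0}f''(t)=1/t\le 1$ holds. So I plan to take $K=1$ in the exponent while still requiring $k\ge 2$, so that the factor $1/(k-1)$ is finite. I would verify this by inspecting the proof of Theorem~\ref{Theorem_fdivergence_PLMC}, where $K$ only enters through the exponent $e^{Kr^2}$ and the constraint $k\ge K$.

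Next I evaluate the leading coefficient $f'(s)-s^{-1}f'(s^{-1})+f(s^{-1})$.
\begin{itemize}
\item For KL, $f'(t)=\log t+1$, so the coefficient is $\log s+1-s^{-1}(1-\log s)-s^{-1}\log s=\log s+1-s^{-1}$. Since $\log s=r^2/2+r$, this matches the stated coefficient.
\item For $\chi^2$, $f'(t)=2(t-1)$, so the coefficient is $2(s-1)-2s^{-1}(s^{-1}-1)+(s^{-1}-1)^2=2s-1-s^{-2}$. This also matches.
\end{itemize}
In both cases the factor $\theta_1(r)^k$ comes directly from the theorem.

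For R\'enyi, I would apply the theorem to $\mathcal H_\alpha$ with $f(t)=(t^\alpha-1)/(\alpha-1)$, so $f''(t)=\alpha t^{\alpha-2}$.
\begin{itemize}
\item $t^{-2}f''(t^{-1})=\alpha t^{-\alpha}\le\alpha$, so $L=\alpha$.
\item $t^{1-K}f''(t)=\alpha t^{\alpha-1-K}\le\alpha$ once $K\ge\alpha-1$, so $N=\alpha$ with $K=\lceil\alpha-1\rceil$. The condition $k\ge\lceil\alpha\rceil$ ensures $k\ge K$ and $k\ge2$.
\item The coefficient is $\tfrac{1}{\alpha-1}\big[\alpha s^{\alpha-1}-s^{-1}\alpha s^{-(\alpha-1)}+s^{-\alpha}-1\big]=\tfrac{1}{\alpha-1}\big[\alpha s^{\alpha-1}-1-(\alpha-1)s^{-\alpha}\big]$.
\end{itemize}
Multiplying $\mathcal H_\alpha$ by $(\alpha-1)$, adding $1$, and taking $\tfrac{1}{\alpha-1}\log$ via $D_\alpha=\tfrac{1}{\alpha-1}\log(1+(\alpha-1)\mathcal H_\alpha)$ should reproduce the bracket. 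Finally, $\theta_1(r)=1-2Q(r/2)=Q(-r/2)-Q(r/2)\le Q(-r/2)$, so replacing $\theta_1(r)^k$ by $Q(-r/2)^k$ is a valid weakening.

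The main obstacle is bookkeeping: pinning down exactly which $K$ makes the exponent $e^{Kr^2}$ agree with the stated $e^{r^2}$ and $e^{\lceil\alpha-1\rceil r^2}$. I would also need to reconcile the prefactor $\alpha r/((k-1)(\alpha-1)^{-1})$ in the R\'enyi bound with the factor $(\alpha-1)$ introduced when passing from $\mathcal H_\alpha$ to $D_\alpha$. Beyond this, every step is a direct substitution, and monotonicity of $\log$ transfers the $\mathcal H_\alpha$ bound to $D_\alpha$.
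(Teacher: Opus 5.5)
Your proposal is correct and follows the paper's route: specialize Theorem~\ref{Theorem_fdivergence_PLMC} with $(L,N,K)=(1,1,1)$ for KL, $(2,2,1)$ for $\chi^2$ (here $K=1$ already works because $f''\equiv 2=N$, so the $K=2$ detour is unnecessary), and $(\alpha,\alpha,\lceil\alpha-1\rceil)$ for $\mathcal H_\alpha$, requiring $k\ge 2$ so that $1/(k-1)$ is finite. The R\'enyi prefactor needs no reconciling, since dividing by $(\alpha-1)^{-1}$ is exactly the factor $(\alpha-1)$ coming from $D_\alpha=\frac{1}{\alpha-1}\log(1+(\alpha-1)\mathcal H_\alpha)$; the weakening $\theta_1(r)^k\le Q(-r/2)^k$ is valid because $0\le\theta_1(r)\le Q(-r/2)$ and the coefficient $\alpha s^{\alpha-1}-1-(\alpha-1)s^{-\alpha}$ is nonnegative for $s\ge 1$.
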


We end this section with a technical remark.

\begin{remark} 
The one-step contraction coefficient in Proposition~\ref{prop_eta_hockey_w_constraned} cannot be uniformly improved over the full class of smooth non-convex potentials. More precisely, there exist smooth non-convex potentials for which the true one-step contraction coefficient $\eta_\gamma$ of the resulting P-LMC kernel asymptotically matches the bound given by $\theta_\gamma$; see Appendix~\ref{app:optimalityofTheta} for an explicit construction and its analysis. This does not, however, imply that the resulting multi-step convergence bound in Theorem~\ref{thm:avg-egamma-general} is globally optimal. The theorem applies the contraction coefficient \textit{linearly} and uniformly at each step. A potentially sharper analysis may be possible through a nonlinear SDPI profile that tracks how the contraction depends on the current divergence level, rather than through a single worst-case coefficient.
\end{remark}

\section{Local SDPI for Independent Metropolis--Hastings} \label{sec:mixingtime_MH}

The previous section showed that global contraction gives a compelling convergence theory for P-LMC on compact domains: compactness of $\mathcal K$ and Gaussian smoothing step together provide a nontrivial contraction coefficient for each iterate of P-LMC. Independent Metropolis--Hastings (IMH) behaves differently. Let $q$ be the proposal distribution, $\pi$ be the target distribution, and assume $\pi\ll q$. The IMH transition kernel is $$\sK(x,dy)=\alpha(x,y)q(dy)+r(x)\delta_x(dy),\qquad \alpha(x,y):=1\wedge\frac{w(y)}{w(x)},$$ where  the importance weight $w(x):=\frac{d\pi}{dq}(x)$ and $r(x):=1-\int \alpha(x,y)q(dy)$ is the rejection probability. Thus, the transition kernel places mass $r(x)$ at the current state $x$. If $r(x)$ can be arbitrarily close to one, then $\sK(x,\cdot)$ can be arbitrarily close to $\delta_x$. A non-trivial global contraction coefficient is therefore generally unavailable in this regime.

For IMH, this obstruction is governed by the importance weight. When $w$ is unbounded, there are states with large $w(x)$ from which most proposals are rejected, so $r(x)$ can be arbitrarily close to one. This explains why the global contraction mechanism that worked for P-LMC does not provide a useful theory for unbounded-weight IMH. This phenomenon is not specific to IMH. Appendix~\ref{app:MH_SDPI_relationship} shows that, for a general Metropolis--Hastings kernel, a simple sufficient condition for a nontrivial global contraction coefficient is a uniform lower bound on the acceptance probability. Such a condition prevents the transition kernel from placing arbitrarily large mass on the current state, but it is highly restrictive and fails in the unbounded-weight IMH regimes considered here.

We therefore replace global contraction by local contraction. For $R\ge1$, define the weight-truncated core $$C_R:=\{x:w(x)\le R\}.$$ On this set, the proposal dominates the target up to factor $R$, so the IMH kernel admits a nontrivial local contraction mechanism. The cost of restricting to $C_R$ is measured by the stationary tail profile outside the core. We write $$h_R:=\pi(C_R),\qquad H_R:=1-h_R=\pi(w>R),\qquad \pi_R:=\pi(\cdot\mid C_R),$$ and assume throughout that $h_R>0$.
For $\alpha\ge1$, define the local hockey-stick contraction coefficient by $$\rho_\alpha(R):=\sup_{\substack{\nu\in \mathcal P(C_R):\\ \mathsf E_\alpha(\nu\|\pi_R)>0}}\frac{\mathsf E_\alpha(\nu\sK\|\pi_R\sK)}{\mathsf E_\alpha(\nu\|\pi_R)}.$$ This is a localized and reference-based contraction coefficient. In contrast to the global coefficient $\eta_\alpha(\sK)$ in \eqref{eq:SDPI_gamma}, which allows both input measures to vary, $\rho_\alpha(R)$ fixes the reference measure to $\pi_R$ and varies only the input law $\nu$ supported on $C_R$. Thus, it measures how strongly one step of IMH contracts distributions on the core toward the locally averaged transition $\pi_R\sK$.

We now connect this local coefficient to the specific obstruction that controls IMH convergence. For IMH, slow mixing is driven by repeated rejections: if the chain starts from a state $x$ and repeatedly rejects proposals, it remains at $x$. Thus, the pointwise rejection probability $r(x)$ is the local quantity that must be controlled. The next proposition shows that $\rho_\alpha(R)$ provides exactly such a control on the core $C_R$. This connection follows from the atomic part of the IMH transition. Under a non-atomic proposal, the accepted-proposal part of the IMH kernel places no mass on the singleton $\{x\}$. Hence, the only mass that $\sK(x,\cdot)$ assigns to $\{x\}$ is the rejection mass $r(x)$. In contrast, $\pi_R\sK$ has no atom at $\{x\}$. Testing the hockey-stick divergence on the set $\{x\}$ therefore forces $\rho_\alpha(R)$ to be at least $r(x)$. Consequently, $\rho_\alpha(R)$ is not merely an abstract local SDPI coefficient: it directly upper bounds the rejection probability on $C_R$.

\begin{proposition}
\label{prop:rho-controls-rejection}
Assume that $q$ is non-atomic. Then, for every $R\ge 1$, every $\alpha\ge 1$, and every $x\in C_R$, we have $$r(x)\le \rho_\alpha(R).$$
Consequently,
$$\int r(x)^k\pi(dx)\le h_R\rho_\alpha(R)^k+H_R.$$
\end{proposition}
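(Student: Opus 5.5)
The plan is to test the definition of $\rho_\alpha(R)$ on the Dirac input $\nu=\delta_x$, which lies in $\mathcal P(C_R)$ because $x\in C_R$. Two quantities are needed: the numerator $\mathsf E_\alpha(\delta_x\sK\|\pi_R\sK)$ and the denominator $\mathsf E_\alpha(\delta_x\|\pi_R)$. I will lower bound the first by $r(x)$ and upper bound the second by $1$.

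\textbf{Denominator.} Since $q$ is non-atomic and $\pi\ll q$, $\pi$ is non-atomic. Hence $\pi_R$ is non-atomic, and so $\delta_x\perp\pi_R$. This gives
$$\mathsf E_\alpha(\delta_x\|\pi_R)=\int d(\delta_x-\alpha\pi_R)_+=1.$$
In particular the denominator is positive, so $\delta_x$ is admissible in the supremum.

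\textbf{Numerator.} Use the variational form $\mathsf E_\alpha(\mu\|\nu)\ge \mu(A)-\alpha\nu(A)$ with $A=\{x\}$. The kernel splits as $\sK(x,\{x\})=\alpha(x,x)q(\{x\})+r(x)=r(x)$, since $q$ has no atoms. For the reference measure,
$$\pi_R\sK(\{x\})=\int \pi_R(dz)\big[\alpha(z,x)q(\{x\})+r(z)\mathbf 1_{\{z=x\}}\big]=\pi_R(\{x\})\,r(x)=0,$$
again by non-atomicity. Therefore $\mathsf E_\alpha(\delta_x\sK\|\pi_R\sK)\ge r(x)$. Dividing by the denominator $1$ gives $r(x)\le\rho_\alpha(R)$ for every $x\in C_R$.

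\textbf{The integral bound.} Split $\int r^k\,d\pi$ over $C_R$ and $C_R^c$:
$$\int r^k\,d\pi=\int_{C_R} r^k\,d\pi+\int_{C_R^c} r^k\,d\pi .$$
On $C_R$, the pointwise bound gives $r^k\le\rho_\alpha(R)^k$, so the first integral is at most $h_R\rho_\alpha(R)^k$. On $C_R^c$, use $r\le1$, so the second integral is at most $\pi(C_R^c)=H_R$. Adding the two yields the claim.

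\textbf{Main obstacle.} The only delicate point is the measure-theoretic care around atoms. One must verify that the acceptance part of $\sK(x,\cdot)$ and all of $\pi_R\sK$ put no mass on $\{x\}$. This includes checking that the rejection parts $r(z)\delta_z$, integrated against the non-atomic $\pi_R$, contribute nothing at the single point $x$. Measurability of singletons is assumed, e.g.\ a standard Borel space.
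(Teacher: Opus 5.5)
Your proposal is correct and matches the paper's proof essentially step for step. Like the paper, you test $\nu=\delta_x$, use non-atomicity of $q$ (hence of $\pi_R$) to get $\mathsf E_\alpha(\delta_x\|\pi_R)=1$, $\sK(x,\{x\})=r(x)$ and $\pi_R\sK(\{x\})=0$, and then split $\int r^k\,d\pi$ over $C_R$ and $C_R^c$.
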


Proposition~\ref{prop:rho-controls-rejection} turns local contraction into rejection-profile control: once $\rho_\alpha(R)$ is bounded, repeated rejections on the core are controlled, and the only remaining contribution comes from the stationary tail $H_R$. It remains to bound the local coefficient itself which is accomplished by the next proposition. 
\begin{proposition}
\label{prop:localSDPI_IMH}
For every $R\ge 1$ and every $\alpha\ge 1$,
$$\rho_\alpha(R)\le 1-\frac{h_R}{R}.$$
\end{proposition}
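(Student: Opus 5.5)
The plan is a local Doeblin minorization on the core $C_R$, followed by a standard splitting argument for the hockey-stick divergence and the data processing inequality.

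\textbf{Step 1 (local minorization).} Fix $x\in C_R$, so $w(x)\le R$. For every $y$,
$$\alpha(x,y)=1\wedge\frac{w(y)}{w(x)}\ \ge\ 1\wedge\frac{w(y)}{R}.$$
On $C_R$ we have $w(y)/R\le 1$, so the right-hand side equals $w(y)/R$ there. Off $C_R$ it is nonnegative. Hence, as measures,
$$\sK(x,dy)\ \ge\ \alpha(x,y)q(dy)\ \ge\ \frac{1}{R}\,w(y)\mathbf 1_{C_R}(y)\,q(dy)=\frac{1}{R}\,\mathbf 1_{C_R}(y)\,\pi(dy)=\frac{h_R}{R}\,\pi_R(dy).$$
Set $\varepsilon:=h_R/R\in(0,1]$; note $\varepsilon\le 1$ since $h_R\le 1\le R$. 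If $\varepsilon=1$, then $\sK(x,\cdot)=\pi_R$ for all $x\in C_R$, so $\nu\sK=\pi_R\sK$ and $\rho_\alpha(R)=0$, which is trivial. Otherwise define, for $x\in C_R$, the residual kernel
$$\tilde\sK(x,\cdot):=\frac{\sK(x,\cdot)-\varepsilon\pi_R}{1-\varepsilon}.$$
By Step 1 it is a nonnegative measure of mass one, so it is a Markov kernel on $C_R$ and $\sK=\varepsilon\pi_R+(1-\varepsilon)\tilde\sK$ there.

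\textbf{Step 2 (splitting).} For $\nu\in\mathcal P(C_R)$, integrating the decomposition gives
$$\nu\sK=\varepsilon\pi_R+(1-\varepsilon)\nu\tilde\sK,\qquad \pi_R\sK=\varepsilon\pi_R+(1-\varepsilon)\pi_R\tilde\sK.$$
Both are legitimate because $\nu$ and $\pi_R$ are supported on $C_R$. Using the variational form $\mathsf E_\alpha(\mu\|\mu')=\sup_A\{\mu(A)-\alpha\mu'(A)\}$, for any measurable $A$,
$$\nu\sK(A)-\alpha\,\pi_R\sK(A)=\varepsilon(1-\alpha)\pi_R(A)+(1-\varepsilon)\big[\nu\tilde\sK(A)-\alpha\,\pi_R\tilde\sK(A)\big].$$
Since $\alpha\ge1$, the first term is $\le 0$. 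Taking the supremum over $A$ yields
$$\mathsf E_\alpha(\nu\sK\|\pi_R\sK)\le(1-\varepsilon)\,\mathsf E_\alpha(\nu\tilde\sK\|\pi_R\tilde\sK).$$

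\textbf{Step 3 (DPI).} The data processing inequality for $\tilde\sK$ gives $\mathsf E_\alpha(\nu\tilde\sK\|\pi_R\tilde\sK)\le \mathsf E_\alpha(\nu\|\pi_R)$. Dividing by $\mathsf E_\alpha(\nu\|\pi_R)>0$ and taking the supremum over $\nu$ gives $\rho_\alpha(R)\le 1-h_R/R$.

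\textbf{Main obstacle.} The only delicate point is the minorization in Step 1. One must use the lower bound $1\wedge w(y)/R$ rather than $w(y)/w(x)$, so that the minorizing measure is independent of $x$. One must also check that restricting to $y\in C_R$ produces exactly a multiple of $\pi_R$, via $w\,dq=d\pi$. It matters too that the reference measure $\pi_R$ is supported on the core, so that the residual decomposition holds for both inputs simultaneously. The $\varepsilon\pi_R$ term then cancels favorably precisely because $\alpha\ge 1$.
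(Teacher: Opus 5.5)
Your proof is correct and follows the paper's argument essentially step for step: the local minorization $\sK(x,\cdot)\ge \frac{h_R}{R}\pi_R$ for $x\in C_R$, the residual-kernel splitting of both $\nu\sK$ and $\pi_R\sK$, dropping the nonpositive term $\varepsilon(1-\alpha)\pi_R(A)$ because $\alpha\ge 1$, and then data processing for the residual kernel. Your variations are cosmetic. You bound $\alpha(x,y)\ge 1\wedge w(y)/R$ and use $w\,dq=d\pi$ instead of the paper's two-case check with $q(dy)=\pi(dy)/w(y)$, you test on sets rather than functions $0\le f\le 1$, and you treat $h_R/R=1$ separately; the $w\,dq=d\pi$ route is slightly cleaner since it never divides by $w(y)$.
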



Combining this upper bound with Proposition~\ref{prop:rho-controls-rejection} gives
\begin{equation}\label{eq:rejectionSDPI}
    \int r(x)^k\pi(dx)\le h_R\rho_\alpha(R)^k+H_R\le h_R\Big[1-\frac{h_R}{R}\Big]^k+H_R\le e^{-kh_R/R}+H_R.
\end{equation}
The bound separates the two sources of error: repeated rejections on the core decay at rate $h_R/R$, while the mass outside the core contributes the tail term $H_R$. The next theorem converts this rejection-profile estimate into a warm-start bound for $\mathsf E_\gamma(\mu_k\|\pi)$.

\begin{theorem}\label{thm:imh-hs-rejection-profile}
Let $\sK$ be the IMH kernel with target $\pi$, proposal $q$, and importance weight $w=d\pi/dq$. Assume that $q$ is non-atomic and that $\mu_0$ is $L$-warm for some $L>1$ and let $\mu_k:=\mu_0\sK^k$.
Then, for every $\gamma\ge 1$, every $\alpha\ge 1$, and every $R\ge 1$, we have 
\begin{equation}\label{eq:mainthm}
    \mathsf E_\gamma(\mu_k\|\pi)\le a_{L,\gamma}(L+1)\Big[h_R\rho_\alpha(R)^k+H_R\Big],
\end{equation}
where $a_{L,\gamma}\coloneqq \frac{(L-\gamma)_+}{L-1}$. In particular, we have 
$$\mathsf E_\gamma(\mu_k\|\pi)\le a_{L,\gamma}(L+1)\left[e^{-kh_R/R}+H_R\right].$$
\end{theorem}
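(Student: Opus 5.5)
The plan is to reduce the statement to a total-variation bound, bound that TV distance by the rejection profile $\int r^k\,d\pi$ via a common-randomness coupling, and then invoke Proposition~\ref{prop:rho-controls-rejection} and \eqref{eq:rejectionSDPI}. The target intermediate bound is
$$\mathsf E_\gamma(\mu_k\|\pi)\le a_{L,\gamma}\,\mathsf{TV}(\mu_k,\pi)\le a_{L,\gamma}(L+1)\int r(x)^k\pi(dx).$$

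\textbf{Step 1: reduction to TV.} Since $\mu_0$ is $L$-warm and $\sK$ is $\pi$-invariant, the remark after the definition of warm start shows that $\mu_k$ is $L$-warm. So $g_k:=d\mu_k/d\pi$ takes values in $[0,L]$ $\pi$-a.e. The map $t\mapsto(t-\gamma)_+$ is convex, vanishes at $t=1$, and equals $(L-\gamma)_+$ at $t=L$. By the chord inequality on $[1,L]$, and because both sides vanish for $t\le 1$,
$$(t-\gamma)_+\le a_{L,\gamma}(t-1)_+\quad\text{for } t\in[0,L].$$
Integrating against $\pi$ gives $\mathsf E_\gamma(\mu_k\|\pi)\le a_{L,\gamma}\,\mathsf{TV}(\mu_k,\pi)$. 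This also explains why the bound vanishes when $\gamma\ge L$.

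\textbf{Step 2: coupling bound on TV.} Run two IMH chains from $X_0=x$ and $Y_0=y$ using the same proposals $y_t\sim q$ and the same uniforms $U_t$. The chains meet once they simultaneously accept the same proposal, and they stay together afterwards. The monotonicity to check is that $\alpha(\cdot,z)=1\wedge w(z)/w(\cdot)$ is nonincreasing in $w(\cdot)$. So if the state with the larger weight accepts $y_t$, the other state accepts it too, and the chains coalesce.

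Now look at a step where the chains have not yet met. If the higher-weight chain rejects, it stays put. The lower-weight chain may jump, but only to a point with weight at most $w(X_t\vee Y_t)$, since otherwise the higher chain would also accept. Hence the maximal-weight state $z^\ast\in\{x,y\}$ (say $w(x)\ge w(y)$ gives $z^\ast=x$) is unchanged until coalescence, and the probability of no coalescence by time $k$ is at most $r(z^\ast)^k\le r(x)^k+r(y)^k$.

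Take $X_0\sim\mu_0$ and $Y_0\sim\pi$ independently, so $Y_k\sim\pi$. The coupling inequality and $d\mu_0/d\pi\le L$ give
$$\mathsf{TV}(\mu_k,\pi)\le\int r^k\,d\mu_0+\int r^k\,d\pi\le (L+1)\int r(x)^k\pi(dx).$$

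\textbf{Step 3: conclusion.} Insert Proposition~\ref{prop:rho-controls-rejection}, which requires $q$ non-atomic, to get $\int r^k\,d\pi\le h_R\rho_\alpha(R)^k+H_R$. This yields \eqref{eq:mainthm}. The final displayed bound then follows from \eqref{eq:rejectionSDPI}, i.e.\ Proposition~\ref{prop:localSDPI_IMH} together with $h_R(1-h_R/R)^k\le e^{-kh_R/R}$.

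I expect the main obstacle to be Step 2. The invariance of the maximal-weight state must be stated carefully, including ties $w(x)=w(y)$ and the weight-zero set. I would formalize it by induction on $t$, using the event that the higher chain has rejected at every step so far. The measurability of the coupling construction is routine.
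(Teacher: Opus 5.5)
Your proposal is correct and follows the paper's proof essentially step for step. The paper uses the same chord inequality $(t-\gamma)_+\le a_{L,\gamma}(t-1)_+$ on $[0,L]$ to reduce to $\mathsf{TV}$, the same common-randomness coupling to get $\mathsf{TV}(\mu_k,\pi)\le (L+1)\int r^k\,d\pi$ (you couple an independent pair $X_0\sim\mu_0$, $Y_0\sim\pi$ directly, while the paper bounds $\mathsf{TV}(\sK^k(x,\cdot),\sK^k(y,\cdot))$ pointwise and integrates by convexity, which is equivalent), and then Propositions~\ref{prop:rho-controls-rejection} and~\ref{prop:localSDPI_IMH}; your explicit check that, before coalescence, the lower-weight chain can only jump to points of weight below the fixed higher-weight state is used only implicitly in the paper's coupling argument, so spelling it out is a worthwhile addition.
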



The proof is given in Appendix~\ref{app:imh_rejection_profile}. The theorem separates the two quantities that govern IMH convergence. The local contraction coefficient $\rho_\alpha(R)$ controls the holding probability on the core $C_R=\{w\le R\}$, while the tail profile $H_R$ measures the amount of target mass outside the region where this control is effective. The explicit IMH estimate $\rho_\alpha(R)\le 1-h_R/R$ then gives the concrete rate. Optimizing over $R$ balances the local contraction term $\exp(-kh_R/R)$ against the tail term $H_R$.


\begin{remark}\label{rem:imh-kl-chi} Theorem~\ref{thm:imh-hs-rejection-profile} gives a one-sided hockey-stick profile bound, controlling $\mathsf E_\gamma(\mu_k\|\pi)$ for all $\gamma\ge1$. Unlike the P-LMC result in Section~\ref{sec:plmc-global-sdpi}, this one-sided control does not by itself imply convergence for arbitrary $f$-divergences through the integral representation in \eqref{f_representation}, since that representation also involves the reverse profile $\mathsf E_\gamma(\pi\|\mu_k)$. Nevertheless, under a warm start, standard divergences such as KL and $\chi^2$-divergence can be controlled directly. Indeed, if $\mu$ is $L$-warm with respect to $\pi$, then $\chi^2(\mu\|\pi)\le 2L\TV(\mu,\pi)$ \cite[Lemma~28]{chewi2021optimal}. Since warmness is preserved by the IMH kernel, Theorem~\ref{thm:imh-hs-rejection-profile} gives, for every $R\ge1$ with $h_R>0$, $$\mathsf{KL}(\mu_k\|\pi)\le \chi^2(\mu_k\|\pi)\le 2L(L+1)\left[e^{-kh_R/R}+H_R\right].$$ Thus the local-SDPI rejection-profile bound also yields quantitative KL and $\chi^2$ convergence under warm starts, even though only the forward hockey-stick profile is controlled. \end{remark}

\subsection{Mixing time under moment constraints}
\label{subsec:imh-moment-tail}

We now compare the tail-profile bound above with the recent finite-time IMH analysis of \cite{deligiannidis2024importance}. Their approach is coupling-based: they construct a common-randomness coupling of two IMH chains using the same proposals and acceptance variables, and use the resulting meeting behavior to control total variation distance. In this coupling, convergence is governed by the event that the chain started from the larger importance weight keeps rejecting, which naturally leads to bounds involving the rejection profile $r(x)^k$ and its stationary average $\int r^k\,d\pi$. 

For standard IMH with unbounded importance weights, \cite[Proposition~4.4]{deligiannidis2024importance} gives a pointwise TV bound of the form
$$\mathsf{TV}(\sK^k(x,\cdot),\pi)\lesssim r(x)^k+M_p k^{-(p-1)},$$
where $M_p:=\mathbb E_q[w^p]$ and the hidden constant depends only on $p$. Integrating this estimate against an $L$-warm initialization yields
$$\mathsf{TV}(\mu_0\sK^k,\pi)\lesssim (L+1)M_p k^{-(p-1)}.$$
The next corollary shows that the same polynomial rate follows from Theorem~\ref{thm:imh-hs-rejection-profile}.
\begin{corollary}\label{cor:imh-moment}
Assume $M_p:=\mathbb E_q[w^p]<\infty$ for some $p>1$. Then, under an $L$-warm start $\mu_0$ and for every $k\ge 1$ and every $\gamma\ge 1$, we have 
$$\mathsf E_\gamma(\mu_k\|\pi)\le C_p\,a_{L,\gamma}(L+1)M_p k^{-(p-1)},$$
where $C_p<\infty$ depends only on $p$.
\end{corollary}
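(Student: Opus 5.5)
The plan is to start from Theorem~\ref{thm:imh-hs-rejection-profile} in its explicit form,
$$\mathsf E_\gamma(\mu_k\|\pi)\le a_{L,\gamma}(L+1)\left[e^{-kh_R/R}+H_R\right],$$
valid for every $R\ge1$ with $h_R>0$. We then optimize over $R$ using the moment bound on the tail profile. From Section~\ref{sec:MH} we have $H_R\le M_pR^{-(p-1)}$, so the whole task reduces to choosing $R=R(k)$ so that both bracketed terms are $O(M_p k^{-(p-1)})$ with constants depending only on $p$.

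\textbf{Step 1: large $R$.} Note that $M_p\ge(\mathbb E_q[w])^p=1$ by Jensen, so $M_p\ge1$. Split into the case where $h_R\ge 1/2$ and the complementary case. Choose $R$ so that $h_R\ge 1/2$ holds, for instance any $R$ with $M_pR^{-(p-1)}\le 1/2$. Then $e^{-kh_R/R}\le e^{-k/(2R)}$.

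\textbf{Step 2: balance the two terms.} Take
$$R=\frac{k}{2c_p\log(k+1)}\vee R_0,\qquad R_0:=(2M_p)^{1/(p-1)},$$
with $c_p=p-1$ or similar. The exponential term is then about $(k+1)^{-c_p}\le k^{-(p-1)}\le M_pk^{-(p-1)}$. The tail term is about $M_p(\log k)^{p-1}k^{-(p-1)}$, which carries an extra log factor. A pure power choice $R=k/\lambda$ gives exponential term $e^{-\lambda/2}$, a constant, which is not good enough. So the log looks unavoidable with this direct optimization.

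\textbf{Step 3: remove the logarithm.} This is the main obstacle. The fix is to not use $e^{-kh_R/R}$ but the sharper intermediate quantity $\int r^k\,d\pi$. Start from \eqref{eq:mainthm}, or better from the bound in its proof, which presumably is $\mathsf E_\gamma(\mu_k\|\pi)\le a_{L,\gamma}(L+1)\int r^k\,d\pi$. On the level set $\{w\approx R\}$, one has $r(x)\le 1-\pi(w\le R)/R$, which is roughly $1-1/(2w(x))$ for $w(x)$ large. Hence
$$\int r^k\,d\pi\le \int e^{-k/(2w)}\,d\pi+\text{(small-$w$ part)}.$$
Now write $\int e^{-k/(2w)}d\pi=\mathbb E_q[w\,e^{-k/(2w)}]$ and use $e^{-k/(2w)}\le C_p\,(w/k)^{p-1}$, since $t^{p-1}e^{-t}$ is bounded. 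This gives $C_pM_pk^{-(p-1)}$ with no logarithm.

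Concretely, apply Proposition~\ref{prop:rho-controls-rejection} and Proposition~\ref{prop:localSDPI_IMH} pointwise with $R=w(x)$, valid since $x\in C_{w(x)}$. This yields $r(x)\le 1-h_{w(x)}/w(x)$. On $\{w\ge R_0\}$ we have $h_{w(x)}\ge1/2$, while on $\{w<R_0\}$ we have $r\le 1-1/(2R_0)$. Integrating then gives the bound with a $C_p$ depending only on $p$, after absorbing the bounded region via $e^{-k/(2R_0)}\le C_p(R_0/k)^{p-1}=2C_pM_pk^{-(p-1)}$.

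If the proof of Theorem~\ref{thm:imh-hs-rejection-profile} does not expose $\int r^k\,d\pi$ directly, fall back to the logarithmic version from Step 2. Then re-derive the integral form by the same warm-start argument used there.
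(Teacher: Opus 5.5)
Your Step 3 is correct and is essentially the paper's proof: the paper also starts from the intermediate bound $\mathsf E_\gamma(\mu_k\|\pi)\le a_{L,\gamma}(L+1)\int r^k\,d\pi$ (which the proof of Theorem~\ref{thm:imh-hs-rejection-profile} does expose, so Steps 1--2 and the logarithmic fallback are unnecessary), splits at $R_0=(2M_p)^{1/(p-1)}$, and applies Propositions~\ref{prop:rho-controls-rejection} and~\ref{prop:localSDPI_IMH} pointwise with $R=w(x)$ when $w(x)>R_0$. The only difference is the final integration: the paper bounds $\mathbb E_\pi[e^{-k/(2w)}\mathbf 1_{\{w>R_0\}}]$ through the tail-integration formula with $H_t\le M_pt^{-(p-1)}$ and a Gamma-function integral, whereas your shorter route uses $d\pi=w\,dq$ and $\sup_t t^{p-1}e^{-t}<\infty$ to get $\mathbb E_q[w\,e^{-k/(2w)}]\le C_pM_pk^{-(p-1)}$ directly, which is equally valid.
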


This corollary shows that the moment-based rate follows as a direct consequence of the tail-profile bound. However, the formulation in Theorem~\ref{thm:imh-hs-rejection-profile} is more general than a moment-based statement: it is expressed directly in terms of the tail profile $H_R=\pi(w>R)$. Moment assumptions provide one convenient way to control this profile, but they are not intrinsic to the theorem. When sharper or model-specific estimates of $H_R$ are available, they can be inserted directly into the bound. The next example illustrates this flexibility by giving a quantitative convergence rate in a case where no moment $\mathbb E_q[w^p]$ with $p>1$ is finite.

\begin{example}\label{ex:no-finite-moment-imh}
Let $\mathcal X=[e,\infty)$ and define
$$\pi(dx)=\frac{1}{x(\log x)^2}\,dx,\qquad q(dx)=\frac{1}{c x^2(\log x)^2}\,dx,$$
where $c:=\mathbb E_\pi[1/X]$. Then $q$ is a probability measure and
$w(x)=cx.$
A direct calculation gives $\mathbb E_q[w^p]=\infty$ for all $p>1$.
Thus finite-moment polynomial bounds of \cite{deligiannidis2024importance} do not yield a quantitative rate in this example. In contrast, our tail-profile approach still applies and yields an explicit convergence rate. In fact, the tail profile is explicit: $H_R= 1/\log(R/c)$ for $R\geq 1$. Choosing $R = \frac{k}{2\log\log k}$ in Theorem~\ref{thm:imh-hs-rejection-profile} therefore yields $\mathsf E_\gamma(\mu_k\|\pi)\lesssim 1/\log k$ under warm starts. See Appendix~\ref{app:example} for more details.  
\end{example}

This example illustrates the value of the tail-profile formulation. Moment assumptions imply tail bounds through Markov's inequality, but the converse need not hold. Thus a theorem stated directly in terms of $H_R$ is strictly more flexible than one stated only under finite moment assumptions on $w$.

\section{Discussion and Future Work}
\label{sec:discussion}

This paper develops a contraction-based view of MCMC mixing through hockey-stick divergences. The main message is that the useful notion of contraction depends on the geometry of the kernel. When the transition has a genuinely global smoothing component on a bounded domain, as in P-LMC, a global contraction coefficient gives a direct and transparent proof of exponential convergence. This is the case even for smooth non-convex potentials: compactness controls the diameter of the pre-noise image, Gaussian smoothing supplies the strict contraction, and data processing handles the remaining steps.

For kernels on noncompact spaces, a global contraction coefficient can be too crude or even trivial. Independent Metropolis--Hastings with unbounded importance weights illustrates this obstruction: from high-weight states, the chain may reject with probability arbitrarily close to one, making global contraction trivial. The right object is then local contraction. Our IMH result shows that a local contraction coefficient on the core $C_R=\{w\le R\}$ controls the rejection profile on that core, while the stationary tail profile $H_R=\pi(w>R)$ quantifies the cost of localization. This yields a tail-adaptive convergence bound: it recovers the sharp moment-based rates when finite moments are available, but remains applicable in heavy-tailed regimes where no finite moment of order $p>1$ exists.

A natural next step is to extend this contraction viewpoint beyond linear global or local coefficients. For more complicated Metropolis--Hastings kernels, including random-walk MH and MALA, the one-step contraction may depend on the current divergence level. This suggests studying \textit{nonlinear} SDPI profiles of the form
$$F_{\gamma,\sK}(t):=\sup\{\mathsf E_\gamma(\mu\sK\|\pi):\mathsf E_\gamma(\mu\|\pi)\le t\},\qquad t\in[0,1].$$
A bound on $F_{\gamma,\sK}(t)$ over the range visited by the chain would provide  a nonlinear contraction principle, potentially sharper than the linear SDPI obtained by the contraction coefficient, namely, $F_{\gamma,\sK}(t)<\eta_\gamma(\sK) t$. Such profiles could combine local contraction, acceptance geometry, and Lyapunov-type tail control in a single object. Developing usable bounds on $F_{\gamma,\sK}$ for MALA and other non-independent MH kernels is a promising direction for future work.

\bibliographystyle{alpha}
\bibliography{biblio}

\appendices
\section{Additional related work}
\label{whole_literature}

\paragraph{Langevin Monte Carlo and discretization bias.}
The Langevin diffusion (LD) and its discretizations have been studied extensively across statistical physics, Bayesian statistics, optimization, and machine learning. For the former, \cite{bakry2014analysis} demonstrated that the Log-Sobolev inequality (LSI) and the Poincar\'e inequality (PI) imply exponential convergence in KL and $\chi^2$-divergences, respectively. The convergence of LD under other metrics and assumptions has been explored in works such as \cite{Hosseini2023Towards, chewi2022analysis, vempala2019advances}. Langevin Monte Carlo, also known as the unadjusted Langevin algorithm, approximates the Langevin diffusion by an Euler discretization. A central challenge is to separate the mixing of the discretized chain from the bias between its stationary distribution and the target distribution. This distinction is now standard in non-asymptotic analyses of LMC \cite{dalalyan2017theoretical,cheng2018convergence,vempala2019advances,durmus2019analysis,chewi2022analysis}. Many results establish convergence to the continuous-time target by combining a mixing bound for the discretized chain with a discretization-bias estimate.

\paragraph{Convex LMC.} The convergence of LMC is well understood under convexity assumptions. Under smoothness and strong convexity of the potential, \cite{dalalyan2017theoretical} established non-asymptotic convergence guarantees in total variation, with subsequent improvements and refinements in \cite{durmus2016high,dalalyan2019user}. Related analyses were extended to KL-divergence by \cite{cheng2018convergence}. Later work relaxed strong convexity to convexity under additional regularity assumptions \cite{durmus2019analysis,dalalyan2022bounding}. Several papers also study potentials that are non-convex on a bounded region but become strongly convex outside it, yielding quantitative convergence guarantees under dissipativity or tail-growth conditions \cite{cheng2018sharp,cheng2020stochastic,ma2019sampling,majka2020nonasymptotic,zheng2022constrained}. A different but related viewpoint was developed by \cite{liang2024independent}, who studied the decay of dependence between the initialization and the current output distribution for smooth convex potentials. 

\paragraph{Non-convex LMC.} In the unconstrained non-convex setting, the literature has expanded both the range of metrics and the structural assumptions under which convergence can be proved. Existing results cover convergence in $W_1$ \cite{raginsky2017non}, $W_2$ \cite{chau2021stochastic}, KL-divergence \cite{vempala2019advances}, Fisher information \cite{balasubramanian2022towards}, $\chi^2$-divergence and R\'enyi divergence \cite{erdogdu2022chisquared}, and more general $f$-divergences \cite{mitra2025fast}. These results typically rely on additional conditions such as dissipativity, log-Sobolev inequalities, Poincar\'e inequalities, Lata\l{}a--Oleszkiewicz inequalities, modified log-Sobolev inequalities, weak Poincar\'e inequalities, or variants of weak smoothness \cite{vempala2019advances,erdogdu2021convergence,chewi2022analysis,Hosseini2023Towards,nguyen2023unadjusted}. For example, \cite{mitra2025fast} prove exponential convergence in $f$-divergence under smoothness and an $f$-Sobolev inequality, while \cite{cheng2024fast} obtain conditional convergence under local functional inequalities. These works primarily concern unconstrained LMC, whereas our P-LMC result exploits compactness and Gaussian smoothing to obtain a direct contraction argument for the projected discrete chain. 

\paragraph{Convex P-LMC.} The constrained setting is comparatively less developed. For convex potentials, \cite{bubeck2018sampling} gave an early polynomial-time analysis of projected Langevin algorithms for constrained log-concave sampling under Lipschitzness and smoothness assumptions. The sharp mixing-time behavior of P-LMC was later characterized by \cite{altschuler2022resolving}, who proved tight total-variation mixing bounds to the stationary distribution $\pi^\eta$ of the projected discretized chain under convexity and smoothness. Their analysis uses the notion of \textit{shifted divergences} and is closely connected to a recent model in differential privacy known as privacy amplification by iteration \cite{feldman2018privacy}. Our result is not intended to improve their optimal convex rate. Instead, it identifies a simpler global contraction mechanism: on a compact domain, Gaussian smoothing yields a hockey-stick contraction coefficient, while the remaining parts of the P-LMC update are handled by data processing. This gives a direct exponential convergence proof that continues to hold for smooth non-convex potentials. 

\paragraph{Non-convex P-LMC.} The closest projected non-convex result to ours is \cite{lamperski2021projected}, who analyze a stochastic projected Langevin algorithm in $W_1$ distance. Their framework allows random potentials of the form $\nu(x,Z)$ and assumes smoothness of the mean potential, Lipschitzness of the sample gradients, and uniform sub-Gaussian control of the gradient noise. Under these assumptions, they obtain a bound of the form $$W_1(\mathcal L(X_T),\pi_{\bar\nu})\le c_1(\eta\log T)^{1/4}+c_2e^{-\eta c_3T},$$ for suitable constants $c_1,c_2,c_3$. Their proof compares the discrete projected process with a continuous-time process and controls the resulting discretization error. Our analysis is different in both object and method. We work directly with the discrete projected chain and prove convergence to its stationary distribution $\pi^\eta$, rather than comparing the chain to a continuous-time target. Moreover, our bounds are in the hockey-stick divergence profile and hence imply total-variation convergence and, when both profiles are controlled, convergence in several $f$-divergences. Thus the comparison is not a direct rate comparison, since the metrics and limiting distributions differ.

\paragraph{Differential privacy and SDPI.} The connection between sampling and privacy has become increasingly important. Privacy amplification by iteration studies how randomized iterative algorithms contract divergences relevant to differential privacy \cite{feldman2018privacy,asoodeh2020contraction,asoodeh2023privacy,PrivacyAmplificationMixing}. This perspective was significantly sharpened by \cite{altschuler2022privacy}, who developed a notion of shifted-divergence to greatly improve privacy analyses for noisy iterative algorithms such as SGD under convexity assumptions, and was subsequently used by \cite{altschuler2022resolving} to analyze projected Langevin algorithms for convex potentials. Our work follows the same broad program of importing contraction tools from privacy into sampling, but uses a different contraction principle. Specifically, we rely on hockey-stick contraction estimates for noisy iterative maps developed in \cite{asoodeh2020contraction,asoodeh2023privacy}, which do not require convexity of the underlying update map. This allows us to obtain mixing-time guarantees for P-LMC under smoothness alone, including non-convex potentials. Thus, while both approaches transfer ideas from differential privacy to sampling, the underlying privacy tools apply in different regimes: the shifted-divergence tools used by \cite{altschuler2022resolving} are tailored to convexity, whereas the contraction tools used here apply directly to the non-convex setting.

\paragraph{Metropolis--Hastings and independent proposals.}
The Metropolis algorithm was introduced by \cite{metropolis1953equation} and generalized by \cite{Hastings1970monte}. General convergence theory for MH algorithms has often focused on uniform, geometric, or polynomial ergodicity. For IMH, \cite{MengersonTweedie} characterized uniform ergodicity through a global envelope condition between the proposal and the target. For random-walk Metropolis algorithms, \cite{roberts1996geometric} and later works related convergence to the tail behavior of the target. Gradient-informed MH algorithms such as MALA improve high-dimensional scaling; classical diffusion-limit analyses show improved asymptotic scaling \cite{roberts1998optimal}, and recent non-asymptotic work establishes sharp complexity bounds under strong log-concavity and smoothness assumptions \cite{chewi2021optimal}.

\paragraph{Drift-minorization and Harris theory.}
Classical general-state-space Markov chain theory is built around small sets, regeneration, and Lyapunov drift. Harris recurrence and its quantitative refinements show that a Markov chain converges when it returns sufficiently often to a set on which a minorization condition holds \cite{harris1956existence,nummelin1984general,meyn2009markov}. In MCMC, these ideas lead to explicit convergence bounds through drift-minorization conditions: one proves a minorization on a small set and a Lyapunov drift inequality that drives the chain back toward that set \cite{rosenthal1995minorization,baxendale2005renewal,jones2001honest,hairer2011yet}. Our IMH analysis shares the same broad core-tail intuition, but it uses a different certificate. We do not prove a Lyapunov drift inequality. Instead, for the natural core $C_R=\{w\le R\}$, we prove a local hockey-stick contraction coefficient and show that this coefficient directly controls the rejection profile on the core. The global error is then expressed through the stationary tail profile $H_R=\pi(w>R)$, rather than through a return-time or drift estimate. Thus the contribution is not merely replacing $\TV$ by $\mathsf E_\gamma$: the argument replaces the drift-minorization mechanism by a local contraction-plus-tail-profile principle, yielding divergence-profile bounds under warm starts.

\begin{table}[t]
\centering
    \caption{Overview of papers presenting convergence results for Langevin dynamics and related algorithms.}
    \label{tab:my_label}
    \centering
    \begin{tabulary}{\columnwidth}{|>{\centering\arraybackslash}m{25mm}|>{\centering\arraybackslash}m{12mm}|>{\centering\arraybackslash}m{15mm}|>
    {\centering\arraybackslash}m{35mm}|>{\centering\arraybackslash}m{15mm}|C|}
        \hline
        Reference & Algo. & Convex & Other Assumptions & Metric & Type\\
        \hline
        \cite{bakry2014analysis} & LD & No & PI & $\chi^2$ & to target\\
        \hline
        \cite{bakry2014analysis} & LD & No & LSI & KL & to target\\
        \hline
        \cite{vempala2019advances} & LD & No & LSI & \renyi{} & to target\\
        \hline
        \cite{chewi2022analysis} & LD & No & Lata\l{}a–Oleszkiewicz inequality & \renyi{} & to target\\
        \hline
        \cite{chewi2022analysis} & LD & No & Modified LSI & \renyi{} & to target\\
        \hline
        \cite{Hosseini2023Towards} & LD & No & Weak PI, s-H\"older & \renyi{} & to target\\
        \hline
        \cite{dalalyan2017theoretical} & LMC & Strong & \msmooth{} & TV & to target\\
        \hline
        \cite{dalalyan2019user} & LMC & Strong & \msmooth{} & $W_2$ & to target\\
        \hline
        \cite{durmus2016high} & LMC & Strong & \msmooth{} & $W_2$ & to target\\
        \hline
        \cite{cheng2018convergence} & LMC & Strong & \msmooth{} & KL & to target\\
        \hline
        \cite{cheng2018sharp} & LMC & Strong outside a ball & \msmooth{} & $W_1$ & to target\\
        \hline
        \cite{ma2019sampling} & LMC & Strong outside a ball & \msmooth{} & TV & to target\\
        \hline
        \cite{cheng2020stochastic} & LMC & Strong outside a ball & \msmooth{} & $W_1$ & to biased\\
        \hline
        \cite{durmus2019analysis} & LMC & Yes & \msmooth{} & KL & to target\\
        \hline  
        \cite{dalalyan2022bounding} & LMC & Yes & \msmooth{} & $W_q$ & to target\\
        \hline
        \cite{raginsky2017non} & LMC & No & LSI, \msmooth{}, dissipative & $W_2$ & to target\\
        \hline
        \cite{chau2021stochastic} & LMC & No & \msmooth{}, dissipative & $W_1$ & to target\\
        \hline
        \cite{vempala2019advances} & LMC & No & LSI, \msmooth{} & KL & to target\\
        \hline
        \cite{vempala2019advances} & LMC & No & LSI, \msmooth{} & \renyi{} & to biased\\
        \hline
        \cite{vempala2019advances} & LMC & No & PI, \msmooth{} & \renyi{} & to biased\\
        \hline
        \cite{nguyen2023unadjusted} & LMC & No & LSI, $\alpha$-mix weakly smooth & KL & to target\\
        \hline
        \cite{erdogdu2022chisquared} & LMC & No & LSI, \msmooth{}, dissipative & KL & to target\\
        \hline
        \cite{erdogdu2022chisquared} & LMC & No & LSI, \msmooth{}, dissipative & \renyi{} & to target\\
        \hline
        \cite{erdogdu2021convergence} & LMC & No & Modified LSI, s-H\"older, dissipative & KL & to target\\
        \hline
        \cite{chewi2022analysis} & LMC & No & Lata\l{}a–Oleszkiewicz inequality, s-H\"older & \renyi{} & to target\\
        \hline
        \cite{chewi2022analysis} & LMC & No & Modified LSI, s-H\"older & \renyi{} & to target\\
        \hline
        \cite{Hosseini2023Towards} & LMC & No & Weak PI, s-H\"older & \renyi{} & to target\\
        \hline
        \cite{mitra2025fast} & LMC & No & \msmooth{}, $f$-Sobolev Inequality & $f$-divergence & to biased\\
        \hline
        \cite{balasubramanian2022towards} & Average-LMC & No & \msmooth{} & Fisher information & to target\\
        \hline
        \cite{lamperski2021projected} & P-LMC & No & \msmooth{}, Uniform sub-Gaussian gradients & $W_1$ & to target\\
        \hline
        \cite{bubeck2018sampling} & P-LMC & Yes & \msmooth{}, Lipschitz & TV & to target\\
        \hline
        \cite{altschuler2022resolving} & P-LMC & Yes & \msmooth{} & TV & to biased\\
        \hline
        Ours & P-LMC & No & \msmooth{} & $f$-divergence & to biased\\
        \hline
    \end{tabulary}
\end{table}

\section{Proofs Omitted from Section~\ref{sec:plmc-global-sdpi}}

\subsection{Proof of Theorem~\ref{thm:avg-egamma-general}}
\label{Appendix_Egamma_PLMC}
We begin by stating the following proposition. 

\begin{proposition}
\label{prop_diameter}
Let $\mathcal{K}$ be a compact set with diameter $D$, and define $\mathcal{S}_B \coloneqq \psi_B(\mathcal{K})$ where each potential function $u_i$ is $M_i$-smooth for $i \in B$. Let $M_B \coloneqq \mathbb{E}_{i \in B} [M_i] = \frac{1}{|B|} \sum_{i \in B} M_i$ denote the batch-averaged smoothness constant. Then$$\mathsf{diam}(\mathcal{S}_B) \le D(\eta M_B + 1).$$
\end{proposition}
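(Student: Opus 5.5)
The plan is to show that $\psi_B$ is $(1+\eta M_B)$-Lipschitz on $\mathcal K$ and then push the diameter of $\mathcal K$ through this Lipschitz map. Fix $x,y\in\mathcal K$. By the definition in \eqref{eq:updatePLMC},
$$\psi_B(x)-\psi_B(y)=(x-y)-\frac{\eta}{|B|}\sum_{i\in B}\big(\nabla u_i(x)-\nabla u_i(y)\big).$$
The triangle inequality, together with the assumption that each $u_i$ is $M_i$-smooth (so $\|\nabla u_i(x)-\nabla u_i(y)\|\le M_i\|x-y\|$), then gives
$$\|\psi_B(x)-\psi_B(y)\|\le \|x-y\|+\frac{\eta}{|B|}\sum_{i\in B}M_i\|x-y\|=(1+\eta M_B)\|x-y\|.$$

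Next I bound the diameter of the image. Any two points of $\mathcal S_B=\psi_B(\mathcal K)$ have the form $\psi_B(x),\psi_B(y)$ with $x,y\in\mathcal K$, so their distance is at most $(1+\eta M_B)\|x-y\|\le(1+\eta M_B)D$. Taking the supremum over all such pairs yields $\mathsf{diam}(\mathcal S_B)\le D(\eta M_B+1)$.

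I expect no real obstacle here. The only subtlety is that smoothness is stated on $\mathcal K$, and the Lipschitz bound on $\nabla u_i$ is used only at points $x,y\in\mathcal K$, which is exactly what the definition provides. No convexity of the $u_i$ is needed; convexity of $\mathcal K$ is also not used. I would also note that $\psi_B$ is continuous, so $\mathcal S_B$ is compact, which is the compactness hypothesis Proposition~\ref{prop_eta_hockey_w_constraned} requires when this bound is applied in the proof of Theorem~\ref{thm:avg-egamma-general}.
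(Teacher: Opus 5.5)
Your proof is correct and follows the paper's argument: both apply the triangle inequality to $\psi_B(x)-\psi_B(y)$, use the $M_i$-Lipschitz property of each $\nabla u_i$ on $\mathcal K$ to get the factor $1+\eta M_B$, and then take the supremum over $x,y\in\mathcal K$. The only cosmetic difference is that you prove the Lipschitz bound pointwise before taking the supremum, whereas the paper takes the supremum of each term separately.
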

\begin{proof}[Proof of Proposition \ref{prop_diameter}]

By the definition of the update map $\psi_B(w) = w - \frac{\eta}{|B|} \sum_{i \in B} \nabla u_i(w)$ and the triangle inequality, we have:
\begin{align}
\mathsf{diam}(\mathcal{S}_B) &= \sup_{w_1,w_2 \in \mathcal{K}} ||\psi_B(w_2) -\psi_B(w_1)|| \nonumber\\
&\le \sup_{w_1, w_2 \in \mathcal{K}} ||w_2 - w_1|| + \frac{\eta}{|B|} \sum_{i \in B} \sup_{w_1, w_2 \in \mathcal{K}} ||\nabla u_i(w_1) - \nabla u_i(w_2)|| \nonumber\\
&\le D + \frac{\eta}{|B|} \sum_{i \in B} \sup_{w_1, w_2 \in \mathcal{K}} M_i ||w_2 - w_1|| \nonumber\\
&= D + \eta D \left( \frac{1}{|B|} \sum_{i \in B} M_i \right) = D(\eta M_B + 1), \nonumber
\end{align}
where the second inequality follows from the $M_i$-smoothness of each potential $u_i$, and the final equality follows from the definition of $M_B$.
\end{proof}

Using Proposition~\ref{prop_diameter}, we compute \(\mathsf{E}_\gamma(\mu_{k+1} \| \pi^\eta)\) after \(k+1\) iterations, where the initial inputs are sampled from \(\pi^\eta\) and \(\mu_0\):
\begin{align}
    \mathsf{E}_\gamma(\mu_{k+1} \| \pi^\eta)
    &=\mathsf{E}_\gamma\Big(\mu_k\big(\Pi_\mathcal{K} \circ \mathsf{K}_G^{\sqrt{2\eta}} \circ \Psi_k \big)  \big\| \pi^\eta \big(\Pi_\mathcal{K} \circ \mathsf{K}_G^{\sqrt{2\eta}} \circ \Psi_k \big)\Big) \nonumber\\
    &\le \mathsf{E}_\gamma\Big(\mu_k\big(\mathsf{K}_G^{\sqrt{2\eta}} \circ \Psi_k \big)  \big\| \pi^\eta\big(\mathsf{K}_G^{\sqrt{2\eta}} \circ \Psi_k \big)\Big) \nonumber\\
    &\le \sum_{B \subset \left[n\right]} \mathbb{P}(B_k = B)   \mathsf{E}_\gamma \left(\mu_k(\mathsf{K}_G^{\sqrt{2\eta}} \circ \psi_B ) \big\| \pi^\eta(\mathsf{K}_G^{\sqrt{2\eta}} \circ \psi_B)  \right) \nonumber\\
    &\le \sum_{B \subset [n]} \mathbb{P}(B_k = B)  \theta_{\gamma}\left(\frac{\mathsf{diam}(\mathcal{S}_B)}{\sqrt{2\eta}}\right) \mathsf{E}_\gamma\left(\psi_B(\mu_k)\big\| \psi_B(\pi^\eta)\right) \nonumber\\
    &\le \sum_{B \subset [n]} \mathbb{P}(B_k = B)  \theta_{\gamma}\left(\frac{\mathsf{diam}(\mathcal{S}_B)}{\sqrt{2\eta}}\right) \mathsf{E}_\gamma\left(\mu_k\| \pi^\eta\right) \label{eq_modify_for_convex}\\
    &\le \sum_{B \subset [n]} \mathbb{P}(B_k = B)  \theta_{\gamma}\left(\frac{D(\eta M_B + 1)}{\sqrt{2\eta}}\right)  \mathsf{E}_\gamma\left(\mu_k\| \pi^\eta\right) \nonumber\\
    &= \mathsf{E}_\gamma\left(\mu_k\| \pi^\eta\right) \sum_{B \subset [n]} \mathbb{P}(B_k = B)\theta_{\gamma}\left(\frac{D(\eta M_B + 1)}{\sqrt{2\eta}}\right)\nonumber\\
    &= \rho_{\gamma,B}\mathsf{E}_\gamma \left(\mu_k\|\pi^\eta\right)\nonumber ,
\end{align}

where $\rho_{\gamma,B} := \sum_{B} \mathbb{P}(B_k = B) \theta_{\gamma}\left(\frac{D(\eta M_B + 1)}{\sqrt{2\eta}}\right)$. 
The first step follows directly from the definition of the P-LMC Markov kernel in \eqref{Kernel_representation} and the fact that $\pi^\eta$ is its stationary distribution. Next, we apply the data processing inequality (DPI), followed by an application of the convexity of $(P, Q) \mapsto \mathsf{E}_\gamma(P \| Q)$. The subsequent step leverages Proposition~\ref{prop_eta_hockey_w_constraned}, after which DPI is applied again. Proposition~\ref{prop_diameter} then leads to the next step. Finally, factoring out common terms simplifies the expression, and the last step holds as the summation evaluates to one.

By induction, this yields $\mathsf{E}_\gamma(\mu_{k+1} \| \pi^\eta) \le (\rho_{\gamma,B})^{k+1} \mathsf{E}_\gamma(\mu_0 \| \pi^\eta)$. Finally, since \egamma-divergence is trivially bounded by 1, we obtain the desired result.
The same argument with the two arguments reversed gives the bound for $\mathsf E_\gamma(\pi^\eta\|\mu_{k+1})$.

We now turn to proving Corollary~\ref{corollary_Egamma_PLMC}. Specifically, we aim to determine $k$ such that $\mathsf{E}_\gamma(\mu_{k} \| \pi^\eta) \le \varepsilon$, which holds when
\begin{align}
    \Bigg[\theta_{\gamma}\Big(\frac{D(\eta M + 1)}{\sqrt{2\eta}}\Big)\Bigg]^k \le \varepsilon \nonumber
\end{align}
Taking the natural logarithm of both sides, we have
\begin{align}
    k \ge \frac{\log \varepsilon}{\log\bigg(\theta_{\gamma}\Big(\frac{D(\eta M + 1)}{\sqrt{2\eta}}\Big)\bigg)}\nonumber.
\end{align}
As a result
\begin{align}
    T_{mix,\mathsf{E}_\gamma}(\varepsilon) \le \frac{\log \varepsilon}{\log\bigg(\theta_{\gamma}\Big(\frac{D(\eta M + 1)}{\sqrt{2\eta}}\Big)\bigg)} \nonumber.
\end{align}

\subsection{Proof of Theorem~\ref{Theorem_fdivergence_PLMC}}
\label{Appendix_fdivergence_PLMC}
Recall that 
\begin{equation}\label{def:thetaGamma}
    \theta_\gamma(r)\coloneqq Q\Big(\frac{\log\gamma}{r}-\frac r2\Big)-\gamma Q\Big(\frac{\log\gamma}{r}+\frac r2\Big),
\end{equation}
\begin{lemma}
\label{monotonetheta}
    For any $\gamma \ge 1$, the function $\gamma \mapsto \theta_\gamma(r)$ is monotonically decreasing.
\end{lemma}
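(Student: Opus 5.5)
The plan is to differentiate $\theta_\gamma(r)$ with respect to $\gamma$ for fixed $r>0$ and show the derivative is nonpositive. Write $a(\gamma):=\frac{\log\gamma}{r}-\frac r2$ and $b(\gamma):=\frac{\log\gamma}{r}+\frac r2$, so $\theta_\gamma(r)=Q(a)-\gamma Q(b)$ with $a'(\gamma)=b'(\gamma)=\frac{1}{r\gamma}$. Let $\phi(t)=(2\pi)^{-1/2}e^{-t^2/2}$, so that $Q'=-\phi$. The chain rule then gives
\begin{equation*}
\frac{\partial}{\partial\gamma}\theta_\gamma(r)=-\frac{\phi(a)}{r\gamma}+\frac{\phi(b)}{r}-Q(b).
\end{equation*}

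The key step is the Gaussian likelihood-ratio identity $\phi(a)=\gamma\,\phi(b)$. It holds because
\begin{equation*}
\frac{b^2-a^2}{2}=\frac{(b-a)(b+a)}{2}=\frac{r\cdot 2\log\gamma/r}{2}=\log\gamma .
\end{equation*}
Substituting this identity, the first two terms cancel exactly, and
\begin{equation*}
\frac{\partial}{\partial\gamma}\theta_\gamma(r)=-Q(b)<0 .
\end{equation*}
This proves strict monotone decrease for $\gamma\ge1$.

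The only delicate point is bookkeeping: checking the cancellation, and treating $r>0$ so that $a$ and $b$ are well-defined. The degenerate case $r=0$ can be handled separately, since there $\theta_\gamma\equiv0$, or by continuity.

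As a sanity check and an alternative route, I can use the operational interpretation. By Proposition~\ref{prop_eta_hockey_w_constraned}, $\theta_\gamma(r)=\mathsf E_\gamma(\mathcal N(r,1)\|\mathcal N(0,1))=\sup_A\{P(A)-\gamma Q(A)\}$. This is a supremum of functions that are nonincreasing in $\gamma$, and hence is itself nonincreasing. I would present the derivative computation as the main proof and mention this one-line argument as a remark, since it also gives convexity in $\gamma$ for free.
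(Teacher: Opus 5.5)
Your proof is correct and follows the same route as the paper: you differentiate $\theta_\gamma(r)$ in $\gamma$, which produces the boundary terms $-\frac{\phi(a)}{r\gamma}+\frac{\phi(b)}{r}$ plus the term $-Q(b)$. Your identity $\phi(a)=\gamma\phi(b)$ shows the boundary terms cancel exactly, so $\partial_\gamma\theta_\gamma(r)=-Q(b)<0$. This is cleaner than the paper's step, and in fact corrects it: the paper collects those terms into a term $T_1$ with an incorrect closed form and only argues that $T_1$ is nonpositive.
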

\begin{proof}
    We use the Leibniz's rule for differentiation under the integral sign to show that $\gamma \mapsto \theta_\gamma(r)$ has negative derivatives. Let $a(\gamma) \coloneqq \frac{\log\gamma}{r} - \frac{r}{2}$ and $b(\gamma) \coloneqq \frac{\log\gamma}{r} + \frac{r}{2}$. We have:
    \begin{align}
       \sqrt{2\pi}\frac{\partial}{\partial \gamma}\theta_\gamma(r) 
        &= -a'(\gamma)e^{-\frac{a^2(\gamma)}{2}}  -\sqrt{2\pi} + \int_{-\infty}^{b(\gamma)} \!\!\!e^{-\frac{u^2}{2}} \,du +\gamma b'(\gamma) e^{-\frac{b^2(\gamma)}{2}} \nonumber\\
        &= \frac{-1}{\gamma r} e^{-\frac{a^2(\gamma)}{2}} - \sqrt{2\pi} + \int_{-\infty}^{b(\gamma)}\!\!\! e^{-\frac{u^2}{2}} \,du + \frac{1}{r} e^{-\frac{b^2(\gamma)}{2}} \nonumber\\
        &= \underbrace{\frac{1}{r} e^{-\frac{\log^2 \gamma}{r^2} - \frac{r^2}{4}} \left( -1 + \gamma^{-1} \right)}_{T_1} - \underbrace{\left(\sqrt{2\pi} - \int_{-\infty}^{b(\gamma)}\!\!\! e^{-\frac{u^2}{2}} \,du \right)}_{T_2}\nonumber
    \end{align}
   Since $\gamma \geq 1$, the term $T_1$ is non-positive, while $T_2$ is positive because the integral is strictly smaller than $\sqrt{2\pi}$. Thus, \( \theta_\gamma(r) \) has negative derivatives with respect to $\gamma$, completing the proof.
\end{proof}

We set $r = \frac{D(\eta M + 1)}{\sqrt{2\eta}}$ and $s = e^{\frac{{r^2}}{2}+r}$. By substituting our upper bound from Corollary~\ref{corollary_Egamma_PLMC} into \eqref{f_representation}, we obtain:
\begin{align}
    D_{f}(\mu_{k} \| \pi^\eta) &\le \int_1^\infty \Big(f''(\gamma) + \gamma^{-3}f''(\gamma^{-1})\Big)\left[\theta_{\gamma}\left(r\right)\right]^{k} \mathrm{d}\!\gamma\nonumber
\end{align}

The previous integral is split as follows:
\begin{align}
    D_{f}(\mu_{k} \| \pi^\eta) &\le \underbrace{\int_1^{s} \left( f''(\gamma) + \gamma^{-3}f''(\gamma^{-1})\right)\left[\theta_{\gamma}\left(r\right)\right]^{k} \mathrm{d}\!\gamma}_{A} + \underbrace{\int_s^{\infty} \left(f''(\gamma) + \gamma^{-3}f''(\gamma^{-1})\right)\left[\theta_{\gamma}\left(r\right)\right]^{k} \!\!\mathrm{d}\!\gamma}_{B}\nonumber
\end{align}

Regarding term $A$, we first use Lemma~\ref{monotonetheta}. Under the assumption that $f$ is twice continuously differentiable, we derive the following upper bound:
\begin{align}
    A &= \int_1^{s} \left( f''(\gamma) + \gamma^{-3}f''(\gamma^{-1})\right)\left[\theta_{\gamma}\left(r\right)\right]^{k} \mathrm{d}\!\gamma \nonumber\\
    &\le \int_1^{s} \left[ f''(\gamma) + \gamma^{-3}f''(\gamma^{-1})\right]\left[\theta_{1}\left(r\right)\right]^{k}\mathrm{d}\!\gamma \nonumber\\
    &= \left[\theta_{1}\left(r\right)\right]^{k} \int_1^{s} \left[f''(\gamma) + \gamma^{-3}f''(\gamma^{-1})\right] \mathrm{d}\!\gamma \nonumber\\
    &= \left[\theta_{1}\left(r\right)\right]^{k}\left[ \int_1^{s} \!\!\!f''(\gamma)\mathrm{d}\!\gamma + \int_{s^{-1}}^{1} tf''(t)\mathrm{d}t\right] \label{eq:substitution}\\
    &= \left[\theta_{1}\left(r\right)\right]^{k}\left[\int_1^{s} \!\!\!f''(\gamma)\mathrm{d}\!\gamma + tf'(t)\Big|_{\frac{1}{s}}^1 - \int_{\tfrac{1}{s}}^1 \!\!\!f'(t)\mathrm{d}t\right] \label{eq:ibp}\\
    &= \left[\theta_{1}\left(r\right)\right]^{k} \left[f'(s) - s^{-1}f'(s^{-1}) - f(1) + f(s^{-1}) \right] \nonumber\\
    &= \left[\theta_{1}\left(r\right)\right]^{k} \left[f'(s) - s^{-1}f'(s^{-1}) + f(s^{-1}) \right] \label{eq:final_bound}
\end{align}

In the above derivation, \eqref{eq:substitution} follows from the substitution $t = \gamma^{-1}$, and \eqref{eq:ibp} is obtained via integration by parts. The final equality \eqref{eq:final_bound} utilizes the property $f(1) = 0$, which holds for all $f$-divergences.

We now derive an upper bound on $B$. First, we simplify $\theta_\gamma(r)$ by dropping the second term and applying the Gaussian tail bound, $Q(x) \le \frac{p(x)}{x}$ for $x > 0$:
\begin{align*}
\theta_\gamma(r) \le Q\left(\tfrac{\log \gamma}{r} - \tfrac{r}{2}\right) \le \frac{p\left(\frac{\log \gamma}{r} - \frac{r}{2}\right)}{\frac{\log \gamma}{r} - \frac{r}{2}}.
\end{align*}

Moreover, under the assumptions
\begin{align*}
\forall x \ge s:\quad x^{1-K} f''(x) \le N,
\qquad
\forall x \ge s:\quad x^{-2} f''(x^{-1}) \le L,
\end{align*}
we obtain
\begin{align}
B &= \int_s^{\infty} \left(f''(\gamma) + \gamma^{-3} f''(\gamma^{-1})\right)
\left[Q\!\left(\tfrac{\log \gamma}{r} - \tfrac{r}{2}\right)\right]^k \, \mathrm{d}\gamma \nonumber \\
&\le \int_s^{\infty} 
\left(\gamma^{K-1} \gamma^{1-K} f''(\gamma) + \gamma^{-3} f''(\gamma^{-1})\right)
\left[\frac{p\!\left(\frac{\log \gamma}{r} - \frac{r}{2}\right)}{\frac{\log \gamma}{r} - \frac{r}{2}}\right]^k
\mathrm{d}\gamma \nonumber \\
&\le \int_s^{\infty} 
\left(N \gamma^{K-1} + L \gamma^{-1}\right)
\left[\frac{p\!\left(\frac{\log \gamma}{r} - \frac{r}{2}\right)}{\frac{\log \gamma}{r} - \frac{r}{2}}\right]^k
\mathrm{d}\gamma.\nonumber
\end{align}

We next apply two successive changes of variables, first $t = \log \gamma$ and then $x = \frac{t}{r} - \frac{r}{2}$. Substituting and using $p(x) = (2\pi)^{-1/2} e^{-x^2/2}$ gives
\begin{align*}
B \le (2\pi)^\frac{-k}{2} \int_{1}^{\infty} r\left(N \left(e^{rx+\frac{r^2}{2}}\right)^{K} + L\right) \left[\frac{\exp\left(-\frac{x^2}{2}\right)}{x}\right]^{k} \mathrm{d}x.
\end{align*}
 
Regrouping terms yields
\begin{align}
B \le r(2\pi)^\frac{-k}{2} \int_{1}^\infty Ne^{Kr^2}\left[\frac{e^{\frac{-(x-r)^2}{2}}}{x}\right]^{K}\left[\frac{e^{-\frac{x^2}{2}}}{x}\right]^{k-K} + L\left[\frac{e^{-\frac{x^2}{2}}}{x}\right]^{k} \mathrm{d}x \nonumber
\end{align}

Finally, using the elementary bound $e^{-y^2} \le 1$ for all $y \ge 1$, we upper bound the exponential terms and obtain
\begin{align}
B &\le r(2\pi)^\frac{-k}{2} \int_{1}^\infty \big(Ne^{Kr^2} + L\big)x^{-k} \mathrm{d}x \nonumber\\
&= r(2\pi)^\frac{-k}{2} \left(Ne^{Kr^2} + L\right) \left(\frac{1}{k-1}\right)\nonumber.
\end{align}

The final step is to combine the upper bounds for $A$, and $B$. This gives us:
\begin{align}
    D_f(\mu_{k} \| \pi^\eta) \leq &\frac{r\big(L + N e^{K r^2}\big)}{k-1} (2\pi)^{\frac{-k}{2}} + \left[ f'(s) - \frac{f'(s^{-1})}{s} + f(s^{-1}) \right] \!\! \Big[\theta_{1}\left(r\right)\Big]^{k} \nonumber
\end{align}

\subsection{Improvement under convexity}
\label{app:convex}
We start by modifying Proposition~\ref{prop_diameter} for the convex case.

\begin{proposition}
\label{prop_diameter_convex}
Let $\mathcal K\subset\mathbb R^d$ be compact with diameter $D$. Assume that each $u_i$ is convex and $M$-smooth on $\mathcal K$, and define
$$\psi_B(x):=x-\frac{\eta}{|B|}\sum_{i\in B}\nabla u_i(x),\qquad \mathcal S_B:=\psi_B(\mathcal K).$$
If $\eta\le 2/M$, then
$$\mathsf{diam}(\mathcal S_B)\le D.$$
\end{proposition}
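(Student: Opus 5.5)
The plan is to show that the batch gradient-step map $\psi_B$ is non-expansive on $\mathcal K$ whenever $\eta\le 2/M$. The diameter bound then follows at once: for any $x,y\in\mathcal K$,
$$\|\psi_B(x)-\psi_B(y)\|\le\|x-y\|\le D,$$
and taking the supremum over $x,y\in\mathcal K$ gives $\mathsf{diam}(\mathcal S_B)\le D$.

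First, reduce to a single function. Set $g_B:=|B|^{-1}\sum_{i\in B}u_i$, so that $\psi_B(x)=x-\eta\nabla g_B(x)$. As an average of convex functions, $g_B$ is convex. Its gradient is the average of the $\nabla u_i$, each $M$-Lipschitz, so by the triangle inequality $\nabla g_B$ is $M$-Lipschitz and $g_B$ is $M$-smooth.

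Second, invoke co-coercivity of the gradient of a convex $M$-smooth function (the Baillon--Haddad theorem):
$$\langle\nabla g_B(x)-\nabla g_B(y),x-y\rangle\ge\tfrac1M\|\nabla g_B(x)-\nabla g_B(y)\|^2 .$$
Expanding the square with $\Delta:=\nabla g_B(x)-\nabla g_B(y)$ gives
$$\|\psi_B(x)-\psi_B(y)\|^2=\|x-y\|^2-2\eta\langle\Delta,x-y\rangle+\eta^2\|\Delta\|^2\le\|x-y\|^2-\eta\Big(\tfrac2M-\eta\Big)\|\Delta\|^2 .$$
The last term is nonnegative for $0\le\eta\le 2/M$, so $\psi_B$ is non-expansive.

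The main obstacle is that $u_i$ is only assumed convex and smooth on $\mathcal K$, not on all of $\mathbb R^d$. The standard proof of co-coercivity evaluates the function at points of the form $y-\tfrac1M\Delta$, which may lie outside $\mathcal K$. I see two ways to handle this.

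\emph{Global extension.} Interpret the hypothesis as in the rest of the paper, where the potentials are smooth on an open neighbourhood or are restrictions of globally convex $M$-smooth functions. Then one applies Baillon--Haddad on $\mathbb R^d$ and restricts to $\mathcal K$.

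\emph{Local argument.} Alternatively, assume the $u_i$ are $C^2$ on a neighbourhood of the convex set $\mathcal K$ and use the Hessian. Convexity and $M$-smoothness give $0\preceq\nabla^2 g_B\preceq MI$ on $\mathcal K$. By convexity of $\mathcal K$, the segment between $x$ and $y$ stays in $\mathcal K$, so
$$\psi_B(x)-\psi_B(y)=\int_0^1\big(I-\eta\nabla^2 g_B(y+t(x-y))\big)\,dt\,(x-y).$$
The averaged matrix is symmetric with eigenvalues in $[1-\eta M,1]\subseteq[-1,1]$, so its operator norm is at most $1$, which again yields non-expansiveness.

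I would present the co-coercivity argument as the main proof and add a remark that the Hessian route covers the purely local hypothesis.
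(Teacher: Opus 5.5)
Your proof is correct and is essentially the paper's: both pass to the averaged batch potential and show that $\psi_B$ is non-expansive for $\eta\le 2/M$. The emphasis is swapped. The paper leads with the integrated-Hessian eigenvalue bound, which is your secondary route, and for merely $M$-smooth potentials cites the standard non-expansiveness of the gradient step, which is exactly your co-coercivity computation. You are more careful than the paper about the hypotheses holding only on $\mathcal K$. Your worry about Baillon--Haddad is also harmless on a convex body: the Hessian route already gives $\langle\nabla g_B(x)-\nabla g_B(y),x-y\rangle\ge\frac1M\|\nabla g_B(x)-\nabla g_B(y)\|^2$, because eigenvalues $\lambda\in[0,M]$ satisfy $\lambda\ge\lambda^2/M$, and mollifying on the interior removes the $C^2$ assumption.
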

\begin{proof}
Let $b:=|B|$ and define the averaged batch potential
$$\bar u_B(x):=\frac1b\sum_{i\in B}u_i(x).$$
Then $\bar u_B$ is convex and $M$-smooth, and
$$\psi_B(x)=x-\eta\nabla\bar u_B(x).$$
We show that $\psi_B$ is non-expansive. Fix $x,y\in\mathcal K$. If the $u_i$ are twice differentiable, then by the fundamental theorem of calculus,
$$\psi_B(x)-\psi_B(y)=\left(I-\eta A_{x,y}\right)(x-y),\qquad A_{x,y}:=\int_0^1 \nabla^2\bar u_B(y+t(x-y))dt.$$
Since $\bar u_B$ is convex and $M$-smooth, $A_{x,y}$ is symmetric positive semidefinite and all its eigenvalues lie in $[0,M]$. Hence every eigenvalue of $I-\eta A_{x,y}$ lies in $[1-\eta M,1]$. If $\eta\le2/M$, then
$$\|I-\eta A_{x,y}\|_{\mathrm{op}}\le1.$$
Therefore
$$\|\psi_B(x)-\psi_B(y)\|\le \|x-y\|.$$
For merely $M$-smooth convex potentials, the same conclusion follows by the standard approximation argument, or equivalently by the standard non-expansiveness of the gradient step $I-\eta\nabla f$ for convex $M$-smooth $f$ and $\eta\le2/M$.
Thus $\psi_B$ is $1$-Lipschitz on $\mathcal K$, and consequently
$$\mathsf{diam}(\mathcal S_B)=\sup_{x,y\in\mathcal K}\|\psi_B(x)-\psi_B(y)\|\le \sup_{x,y\in\mathcal K}\|x-y\|=D.$$
\end{proof}

Having Proposition~\ref{prop_diameter_convex}, we revise the upper bound for TV distance and mixing time for P-LMC. A straightforward manipulation of \eqref{eq_modify_for_convex} leads to following bound for TV distance:
\begin{align}
    \mathsf{TV}(\mu_{k}, \pi^\eta) &\le \Bigg[1 - 2Q\Big(\frac{D}{2\sqrt{2\eta}}\Big)\Bigg]^k. \nonumber
\end{align}
This yields the following upper bound for mixing time:
\begin{align}
        T_{mix,\mathsf{TV}}\Big(\varepsilon\Big) \le \frac{\log \varepsilon}{\log\Big[1 - 2 Q\big(\frac{D}{2\sqrt{2\eta}}\big)\Big]}. \nonumber
\end{align}

\subsection{Average-case vs.\ worst-case convergence bound}
\label{app:averageVSworst}
Consider the compact interval $\mathcal K=[a,b]$ with $a<b$ and potential function $u(x)=c\big(z^2-\tfrac14\big)^2$, where $z\coloneqq\frac{x-m}{s}$, $m\coloneqq \frac{a+b}{2}$,   $s\coloneqq \frac{b-a}{2},$ $c>0$ is a fixed constant. Now, we define $u_i(x) = w_i u(x)$ for $i\in [n]$ where the weights $w_i$ satisfy $w_i\geq 0$ and $\sum_{i=1}^nw_i=1$. This is a rescaled double-well potential on the interval $[a,b]$ (thus non-convex), with wells located at $x = m \pm \frac{s}{2}$. It can be verified that $u$ is $L$-smooth with $L = \tfrac{44c}{(b-a)^2}$, thus each $u_i$ is $M_i$-smooth with $M_i = w_i L$.

In Table~\ref{tab:tv-poisson-swr2}, we demonstrate the bounds obtained from Theorem~\ref{thm:avg-egamma-general} for the Poisson sampling and sampling without replacement. We then compare these bounds with the worst-case bound (Corollary~\ref{corollary_Egamma_PLMC}) and also with the approximate values of $\mathsf{TV}(\mu_k, \pi^\eta)$ for different values of $k$. In this example, we use $n= 12$, $a = -b = 1$, $c=0.1$, $\eta=0.15$, and the weights are as follows: $w_1=0.8$ and $w_i=\frac{0.2}{n-1},$ for $i\in \{2,\dots,n\}$. For the Poisson sampling, we take $p = 0.2$ and for sampling without replacement, we take $b = 2$. In the average-case setting, Poisson sampling consistently outperforms sampling without replacement, albeit slightly.
\begin{table}[t]
\centering
\caption{Empirical total variation distance, worst-case bound, and average-case bound for Poisson sampling and sampling without replacement.}
\label{tab:tv-poisson-swr2}
\begin{tabular}{|c | c | c | c | c | c |}
\hline
 & \multicolumn{2}{c}{Poisson sampling} \vline & \multicolumn{2}{c}{Sampling without replacement} \vline & \\
 \cline{2-5}
$k$ & empirical & average-case & empirical & average-case & Worst-Case\\
\hline
1  & 0.110 & 0.650 & 0.110 & 0.651 & 0.751 \\
5  & $3.015\times 10^{-5}$ & 0.116 & $3.000\times 10^{-5}$ & 0.117 & 0.239\\
10 & $1.046\times 10^{-9}$ & 0.013 & $1.035\times 10^{-9}$ & 0.013 & 0.057\\
15 & $3.603\times 10^{-14}$ & 0.001 & $3.566\times 10^{-14}$ & 0.001 & 0.013\\
20 & $2.956\times 10^{-16}$ & $1.835\times 10^{-4}$ & $2.880\times 10^{-16}$ & $1.885\times 10^{-4}$ & 0.003 \\
\hline
\end{tabular}
\end{table}

This example shows that Theorem~\ref{thm:avg-egamma-general} can be conservative for specific benign non-convex potentials. This, however, does \emph{not} mean that the theorem can be uniformly improved over the general class of smooth non-convex potentials as delineated in the next section.

\subsection{Asymptotic optimality of Theorem~\ref{thm:avg-egamma-general}}\label{app:optimalityofTheta}
Consider $\mathcal K_R=[-R,R]$ and $u_{i,R}(x):=-\delta_Rx^2/2$ for $i\in[n]$, where $\delta_R>0$. Each $u_{i,R}$ is smooth and non-convex, with smoothness constant $M_R=\delta_R$. Moreover, every batch $B$ induces the same update map $\psi_{B,R}(x)=(1+\eta\delta_R)x$. Hence, we have $\psi_{B,R}(\mathcal K_R)=[-(1+\eta\delta_R)R,\,(1+\eta\delta_R)R]$, implying $\mathsf{diam}(\psi_{B,R}(\mathcal K_R))=2R(1+\eta\delta_R)$ \textit{exactly}. 
Thus the contraction coefficient from Theorem~\ref{thm:avg-egamma-general} is
$$\rho_{\gamma,R}:=\theta_\gamma\left(\frac{2R(1+\eta\delta_R)}{\sqrt{2\eta}}\right),$$
independently of the sampling scheme. That this is only an upper bound (as opposed to an identity) is merely due to the projection step (following DPI). 

We now compare this upper bound with the asymptotic true one-step divergence of the projected kernel. Choose $\Delta_R>0$ such that $\Delta_R\to\infty$, $\Delta_R/R\to 0$, and $\eta\delta_R R=o(\Delta_R)$ as $R\to \infty$, and define interior initial states $x_R^\pm:=\pm(R-\Delta_R)$. For every batch $B$, each iteration of P-LMC before projection is kernel
$Q_{B,R}(x,\cdot)=\mathcal N\!\bigl((1+\eta\delta_R)x,\,2\eta\bigr),$ while the projected kernel is $\sK_{B,R}=\Pi_{\mathcal K_R}\circ Q_{B,R}$. Let $\mu_R$ be the mean of the unprojected kernel when the initial point is $x_R^+$, i.e., $\mu_R:=(1+\eta\delta_R)(R-\Delta_R).$ Then the distance from the mean $\mu_R$ to the boundary $R$ is $d_R=\Delta_R-\eta\delta_R(R-\Delta_R).$
By assumptions, we have $d_R = \Delta_R + o(\Delta_R)$, or equivalently, $\frac{d_R}{\Delta_R} \to 1$. Thus, $d_R/\sqrt{2\eta}\to\infty$. 

Therefore, we can see that $Q_{B,R}(x^+_R,\cdot)$ falls outside $[-R, R]$ with probability tending to zero, that is, 
$$\mathbb{P}\Big(\mu_R+\sqrt{2\eta}\,Z\notin[-R,R]\Big)\le
2Q\Big(\frac{d_R}{\sqrt{2\eta}}\Big)=o(1).$$
Now, let $(Y, \Pi_{\mathcal K_R}(Y))$ be a coupling of $(Q_{B, R}(x, \cdot), \sK_{B,R}(x, \cdot))$. Then, we have 
$$\TV \bigl(\sK_{B,R}(x_R^+,\cdot),\,Q_{B,R}(x_R^+,\cdot)\bigr)\leq \Pr(Y\notin \mathcal K_R)= o(1).$$
With the same argument, we can also obtain  $\TV \bigl(\sK_{B,R}(x_R^-,\cdot),\,Q_{B,R}(x_R^-,\cdot)\bigr) = o(1).$
Using an identity proved in Appendix~\ref{App:identity}, we can write 
$$\sE_\gamma\!\bigl(\sK_{B,R}(x_R^-,\cdot)\,\|\,\sK_{B,R}(x_R^+,\cdot)\bigr) =\sE_\gamma\!\bigl(Q_{B,R}(x_R^-,\cdot)\,\|\,Q_{B,R}(x_R^+,\cdot)\bigr)+o(1),$$
implying that 
\begin{align*}
    \sE_\gamma\!\bigl(\sK_{B,R}(x_R^-,\cdot)\,\|\,\sK_{B,R}(x_R^+,\cdot)\bigr) & =\theta_\gamma \Big(\frac{2(1+\eta\delta_R)(R-\Delta_R)}{\sqrt{2\eta}}\Big)+o(1)\\
    & = \rho_{\gamma,R}+o(1),
 \end{align*}
 where the second step follows from the fact that $\tfrac{2(1+\eta\delta_R)(R-\Delta_R)}{\sqrt{2\eta}}$ and $\tfrac{2R(1+\eta\delta_R)}{\sqrt{2\eta}}$ approach $\infty$, i.e., $\theta_\gamma$ evaluated at these two arguments differs by $o(1)$. 
Finally, it follows 
$$\rho_{\gamma,R}+o(1) \le\eta_\gamma(\sK_R)\le\rho_{\gamma,R},$$
implying $\eta_\gamma(\sK_R)=\rho_{\gamma,R}+o(1).$ Consequently, the one-step contraction coefficient in Theorem~\ref{thm:avg-egamma-general} is attained asymptotically by a natural family of smooth non-convex potentials. In particular, there is no uniformly smaller replacement for this one-step coefficient over the full class of smooth non-convex potentials without imposing additional structure.

\section{An \texorpdfstring{$\mathsf E_\gamma$}{Egamma}-divergence Identity}\label{App:identity}
\textbf{Claim.} For any distributions $P,P',Q,Q'$ and any $\gamma\ge1$, we have 
$$\bigl|\sE_\gamma(P\|Q)-\sE_\gamma(P'\|Q')\bigr| \le\TV(P,P')+\gamma\,\TV(Q,Q'). $$

To prove this identity, note that $\sE_\gamma(P\|Q)=\sup_A \bigl[P(A)-\gamma Q(A)\bigr]$. Let
$$f(A):=P(A)-\gamma Q(A), \qquad\text{and}\qquad g(A):=P'(A)-\gamma Q'(A).$$
Then, we have 
$$\sE_\gamma(P\|Q)=\sup_A f(A), \qquad\text{and}\qquad  \sE_\gamma(P'\|Q')=\sup_A g(A).$$
Using the elementary inequality
$$\sup_A f(A)-\sup_A g(A)\le \sup_A\bigl(f(A)-g(A)\bigr),$$
we obtain
$$\sE_\gamma(P\|Q)-\sE_\gamma(P'\|Q')\le\sup_A \Bigl[(P(A)-P'(A))-\gamma(Q(A)-Q'(A))\Bigr].$$
Therefore
\begin{align*}
    \sE_\gamma(P\|Q)-\sE_\gamma(P'\|Q')&\le\sup_A |P(A)-P'(A)|+\gamma \sup_A |Q(A)-Q'(A)|\\
    & = \TV(P,P') + \gamma \TV(Q,Q').
\end{align*}
Interchanging \((P,Q)\) and \((P',Q')\) gives
$$\sE_\gamma(P'\|Q')-\sE_\gamma(P\|Q)\le\TV(P,P')+\gamma\,\TV(Q,Q'),$$
and combining the two inequalities yields
$$\bigl|\sE_\gamma(P\|Q)-\sE_\gamma(P'\|Q')\bigr|
\le\TV(P,P')+\gamma\,\TV(Q,Q').$$

\section{Proofs Omitted from Section~\ref{sec:mixingtime_MH}}

\subsection{Global SDPI for general Metropolis-Hastings algorithms}\label{app:MH_SDPI_relationship}
We provide a simple sufficient condition under which a general Metropolis--Hastings kernel admits a nontrivial global contraction coefficient. Let $\pi$ be a target distribution on $\mathcal X$ and let $Q$ be a proposal kernel. The Metropolis--Hastings kernel associated with $(\pi,Q)$ is $$\sK(x,dy)=\alpha(x,y)Q(x,dy)+r(x)\delta_x(dy),$$ where $\alpha(x,y)$ is the Metropolis--Hastings acceptance probability and $$r(x):=1-\int \alpha(x,z)Q(x,dz)$$ is the rejection, equivalently holding, probability. Assume that, for each $x\in\mathcal X$, the measure $Q(x,\cdot)$ admits a density $q(x,\cdot)$ with respect to a common dominating measure, and that $\pi$ admits a density, also denoted by $\pi$. Then the Metropolis--Hastings acceptance probability is given by $$\alpha(x,y):=1\wedge \frac{\pi(y)q(y,x)}{\pi(x)q(x,y)}.$$ Independent Metropolis--Hastings is the special case $Q(x,dy)=q(dy)$. In this case, if $w=d\pi/dq$, then $$\alpha(x,y)=1\wedge\frac{w(y)}{w(x)}.$$

We now show that a uniform lower bound on the acceptance probability yields a global contraction coefficient. Assume that there exists $a>0$ such that $$\alpha(x,y)\ge a,\qquad x,y\in\mathcal X.$$ Then, for every $x$, we can write  $$\sK(x,\cdot)=aQ(x,\cdot)+(1-a)R(x,\cdot),$$ where $$R(x,A):=\frac{1}{1-a}\left\{\int_A(\alpha(x,z)-a)Q(x,dz)+r(x)\delta_x(A)\right\}.$$ The assumption $\alpha(x, y)\ge a$ ensures that $R(x,\cdot)$ is nonnegative. Moreover, $R$ is a Markov kernel, that is, $R(x, \cdot)$ is a probability measure.  

By \cite[Theorem~2]{asoodeh2020contraction}, for any Markov kernel $\sK$ and any $\gamma\ge1$, the hockey-stick contraction coefficient admits a remarkably simple two-point characterization  $$\eta_\gamma(\sK)=\sup_{x,y\in\mathcal X}\mathsf E_\gamma(\sK(x,\cdot)\|\sK(y,\cdot)).$$ Therefore, it suffices to control the one-step divergence between $\sK(x,\cdot)$ and $\sK(y,\cdot)$ for arbitrary $x,y\in\mathcal X$. Using the mixture decomposition above and the joint convexity of $\mathsf E_\gamma$, we obtain $$\mathsf E_\gamma(\sK(x,\cdot)\|\sK(y,\cdot))\le a\mathsf E_\gamma(Q(x,\cdot)\|Q(y,\cdot))+(1-a)\mathsf E_\gamma(R(x,\cdot)\|R(y,\cdot)).$$ Since $\mathsf E_\gamma(P\|Q)\le1$ for all probability measures $P,Q$ and all $\gamma\ge1$, $$\mathsf E_\gamma(\sK(x,\cdot)\|\sK(y,\cdot))\le a\mathsf E_\gamma(Q(x,\cdot)\|Q(y,\cdot))+1-a.$$ Taking the supremum over $x,y$ and applying the same Dobrushin representation to $Q$ gives $$\eta_\gamma(\sK)\le a\eta_\gamma(Q)+1-a=1-a(1-\eta_\gamma(Q)),\qquad \gamma\ge1.$$

\subsection{Proof of Proposition~\ref{prop:rho-controls-rejection}}\label{app:rejectionUB}
Assume $h_R>0$. Since $q$ is non-atomic and $\pi\ll q$, we have $\pi(\{x\})=0$ for every $x\in\mathcal X$. Hence the conditional distribution $\pi_R=\pi(\cdot\mid C_R)$ is also non-atomic: for every $x$,
$$\pi_R(\{x\})=\frac{\pi(\{x\}\cap C_R)}{h_R}=0.$$
Fix $x\in C_R$. For any $z$, the IMH kernel satisfies
$$\sK(z,\{x\})=\int_{\{x\}}\alpha(z,y)q(dy)+r(z)\delta_z(\{x\})=r(z)\mathbf 1_{\{z=x\}},$$
because $q(\{x\})=0$. In particular, $\sK(x,\{x\})=r(x),$
while
$$\pi_R\sK(\{x\})=\int \sK(z,\{x\})\pi_R(dz)=\int r(z)\mathbf 1_{\{z=x\}}\pi_R(dz)=0,$$
since $\pi_R(\{x\})=0$. Moreover, $\mathsf E_\alpha(\delta_x\|\pi_R)=1$, because $\pi_R(\{x\})=0$. Taking $\nu=\delta_x$ in the definition of $\rho_\alpha(R)$ and testing the set $\{x\}$ in the variational formula for $\mathsf E_\alpha$-divergence give
$$\rho_\alpha(R)\ge \mathsf E_\alpha(\delta_x\sK\|\pi_R\sK)\ge \delta_x\sK(\{x\})-\alpha\pi_R\sK(\{x\})=r(x).$$
Therefore $r(x)\le\rho_\alpha(R)$ for every $x\in C_R$. Since $0\le r\le1$,
$$\int r(x)^k\pi(dx)=\int_{C_R}r(x)^k\pi(dx)+\int_{C_R^c}r(x)^k\pi(dx)\le h_R\rho_\alpha(R)^k+H_R.$$

\subsection{Proof of Proposition~\ref{prop:localSDPI_IMH}}\label{app:local_imh_sdpi}
For $x\in C_R$ and $y\in C_R$,
$$\alpha(x,y)q(dy)=\left(1\wedge \frac{w(y)}{w(x)}\right)\frac{\pi(dy)}{w(y)}\ge \frac1R\pi(dy).$$
Indeed, if $w(y)\le w(x)$, then the left-hand side is $\pi(dy)/w(x)\ge \pi(dy)/R$; if $w(y)>w(x)$, then it is $\pi(dy)/w(y)\ge \pi(dy)/R$. Hence, for every $x\in C_R$,
$$\sK(x,\cdot)\ge \frac1R\pi(\cdot\cap C_R)=\frac{h_R}{R}\pi_R(\cdot).$$
Set $\beta_R:=h_R/R$. Then, for $x\in C_R$,
$$\sK(x,\cdot)=\beta_R\pi_R(\cdot)+(1-\beta_R)\widetilde \sK_R(x,\cdot),$$
for some Markov kernel $\widetilde \sK_R$. Therefore, for any $\nu(C_R)=1$,
$$\nu \sK=\beta_R\pi_R+(1-\beta_R)\nu\widetilde \sK_R,\qquad \pi_R\sK=\beta_R\pi_R+(1-\beta_R)\pi_R\widetilde \sK_R.$$
Using the variational formula
$$\mathsf E_\alpha(P\|Q)=\sup_{0\le f\le 1}\{P(f)-\alpha Q(f)\},$$
we get, for every $0\le f\le 1$,
$$\nu \sK(f)-\alpha\pi_R\sK(f)=\beta_R(1-\alpha)\pi_R(f)+(1-\beta_R)\{\nu\widetilde \sK_R(f)-\alpha\pi_R\widetilde \sK_R(f)\}.$$
Since $\alpha\ge 1$, the first term is nonpositive. Thus
$$\mathsf E_\alpha(\nu \sK\|\pi_R\sK)\le (1-\beta_R)\mathsf E_\alpha(\nu\widetilde \sK_R\|\pi_R\widetilde \sK_R)\le (1-\beta_R)\mathsf E_\alpha(\nu\|\pi_R),$$
where the last step is the data processing inequality. Therefore $\rho_\alpha(R)\le 1-h_R/R$.

\subsection{A direct local-to-global SDPI recursion}
\label{app:local_to_global_recursion}
The main text uses the sharper rejection-profile route for IMH. For completeness, we record here a direct local-to-global SDPI recursion. This result is useful conceptually, but for IMH it pays the tail cost at every step and is therefore weaker than Theorem~\ref{thm:imh-hs-rejection-profile}.

The next result lifts the local contraction coefficient $\rho_\alpha$ to the global hockey-stick divergence and shows how the local contraction coefficient on $C_R$ controls the global hockey-stick divergence after one MH step, up to two explicit costs: the mass of the initialization outside the core and the mismatch between the core masses of $\mu$ and $\pi$.

\begin{theorem}[Local-to-global hockey-stick lifting]
\label{thm:Local_to_global}
Let $\mu$ be a probability measure and let $\mu_R:=\mu(\cdot\mid C_R)$ whenever $m_R:=\mu(C_R)>0$. Then, for every $\gamma\ge 1$,
$$\mathsf E_\gamma(\mu \sK\|\pi)\le m_R\rho_{\bar\alpha_R}(R)\mathsf E_{\bar\alpha_R}(\mu_R\|\pi_R)+\mu(C_R^c)+(m_R-\gamma h_R)_+,$$
where
$\bar\alpha_R:=\max\Big\{1,\frac{\gamma h_R}{m_R}\Big\}.$
\end{theorem}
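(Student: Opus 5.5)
The plan is to work directly with the variational formula $\mathsf E_\gamma(P\|Q)=\sup_A\{P(A)-\gamma Q(A)\}$. We split $\mu$ into its core and tail parts and use stationarity $\pi=\pi\sK$ on the reference side. The degenerate case $m_R=0$ is trivial, since then $\mu(C_R^c)=1\ge \mathsf E_\gamma(\mu\sK\|\pi)$, so assume $m_R>0$.

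\textbf{Decomposition.} Write $\mu=m_R\mu_R+\mu|_{C_R^c}$ and $\pi=h_R\pi_R+\pi|_{C_R^c}$. Then
$$\mu\sK=m_R\,\mu_R\sK+(\mu|_{C_R^c})\sK,\qquad \pi=\pi\sK=h_R\,\pi_R\sK+(\pi|_{C_R^c})\sK.$$
For any measurable $A$, we bound $(\mu|_{C_R^c})\sK(A)\le\mu(C_R^c)$. We also drop the nonnegative term $\gamma(\pi|_{C_R^c})\sK(A)$. This gives
$$\mu\sK(A)-\gamma\pi(A)\le m_R\,\mu_R\sK(A)-\gamma h_R\,\pi_R\sK(A)+\mu(C_R^c).$$
It therefore remains to bound $\sup_A\{m_R\mu_R\sK(A)-\gamma h_R\pi_R\sK(A)\}$ by $m_R\rho_{\bar\alpha_R}(R)\mathsf E_{\bar\alpha_R}(\mu_R\|\pi_R)+(m_R-\gamma h_R)_+$.

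\textbf{Two cases according to $\bar\alpha_R$.} If $\gamma h_R\ge m_R$, then $\bar\alpha_R=\gamma h_R/m_R\ge1$ and $(m_R-\gamma h_R)_+=0$. The expression equals $m_R[\mu_R\sK(A)-\bar\alpha_R\pi_R\sK(A)]$, whose supremum is $m_R\mathsf E_{\bar\alpha_R}(\mu_R\sK\|\pi_R\sK)$.

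If $\gamma h_R<m_R$, then $\bar\alpha_R=1$. We write
$$m_R\mu_R\sK(A)-\gamma h_R\pi_R\sK(A)=m_R[\mu_R\sK(A)-\pi_R\sK(A)]+(m_R-\gamma h_R)\pi_R\sK(A).$$
Using $\pi_R\sK(A)\le1$, the supremum is at most $m_R\mathsf E_1(\mu_R\sK\|\pi_R\sK)+(m_R-\gamma h_R)_+$. In both cases we are left with $m_R\mathsf E_{\bar\alpha_R}(\mu_R\sK\|\pi_R\sK)$ plus the claimed mismatch term.

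\textbf{Applying the local coefficient.} Since $\mu_R\in\mathcal P(C_R)$, the definition of $\rho_\alpha(R)$ with $\alpha=\bar\alpha_R$ gives
$$\mathsf E_{\bar\alpha_R}(\mu_R\sK\|\pi_R\sK)\le\rho_{\bar\alpha_R}(R)\,\mathsf E_{\bar\alpha_R}(\mu_R\|\pi_R),$$
provided $\mathsf E_{\bar\alpha_R}(\mu_R\|\pi_R)>0$. The only point needing care is the excluded case $\mathsf E_{\bar\alpha_R}(\mu_R\|\pi_R)=0$. There $\mu_R\le\bar\alpha_R\pi_R$ setwise, hence $\mu_R\sK\le\bar\alpha_R\pi_R\sK$ by positivity of $\sK$, so the left side is also $0$ and the inequality still holds.

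Combining the three steps and taking the supremum over $A$ yields the stated bound. I expect no step to be a real obstacle. The only delicate points are choosing $\bar\alpha_R$ so that the reference weight $\gamma h_R/m_R$ is absorbed into a legitimate hockey-stick parameter $\ge1$, and handling the degenerate cases $m_R=0$ and zero divergence.
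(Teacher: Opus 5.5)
Your proposal is correct and follows essentially the same route as the paper: decompose $\mu\sK$ and $\pi=\pi\sK$ into core and tail parts, bound the tail contribution by $\mu(C_R^c)$, rewrite the core part using $m_R\bar\alpha_R-\gamma h_R=(m_R-\gamma h_R)_+$ (which you split into two cases rather than writing as one identity), and then apply the definition of $\rho_{\bar\alpha_R}(R)$. Your explicit treatment of $m_R=0$ and of the zero-divergence case $\mathsf E_{\bar\alpha_R}(\mu_R\|\pi_R)=0$, which the definition of $\rho_{\bar\alpha_R}(R)$ excludes, is a small point of rigor that the paper leaves implicit.
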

\begin{proof}
By stationarity of $\pi$,
$$\pi=\pi \sK=h_R\pi_R\sK+H_R\pi_{R^c}\sK,$$
where $\pi_{R^c}:=\pi(\cdot\mid C_R^c)$ when $H_R>0$. Similarly,
$$\mu \sK=m_R\mu_R\sK+(1-m_R)\mu_{R^c}\sK.$$
For $0\le f\le 1$,
$$\mu \sK(f)-\gamma\pi(f)=m_R\mu_R\sK(f)-\gamma h_R\pi_R\sK(f)+(1-m_R)\mu_{R^c}\sK(f)-\gamma H_R\pi_{R^c}\sK(f).$$
The last two terms are bounded above by $1-m_R=\mu(C_R^c)$. For the first two terms,
$$m_R\mu_R\sK(f)-\gamma h_R\pi_R\sK(f)=m_R\{\mu_R\sK(f)-\bar\alpha_R\pi_R\sK(f)\}+(m_R\bar\alpha_R-\gamma h_R)\pi_R\sK(f).$$
Since $0\le\pi_R\sK(f)\le1$ and $m_R\bar\alpha_R-\gamma h_R=(m_R-\gamma h_R)_+$,
$$\mu \sK(f)-\gamma\pi(f)\le m_R\mathsf E_{\bar\alpha_R}(\mu_R\sK\|\pi_R\sK)+\mu(C_R^c)+(m_R-\gamma h_R)_+.$$
Taking the supremum over $0\le f\le 1$ and applying the definition of $\rho_{\bar\alpha_R}(R)$ proves the claim.
\end{proof}
The level shift $\bar\alpha_R$ is the price of conditioning on the core. If $\mu$ places less mass on $C_R$ than $\pi$ does, then the density ratio inside $C_R$ is amplified, and the relevant local hockey-stick level can be larger than the global level $\gamma$.

Theorem~\ref{thm:Local_to_global} explains why a direct step-by-step local-SDPI recursion is valid but not sharp. Indeed, one has the deterministic comparison
$$m_R\mathsf E_{\bar\alpha_R}(\mu_R\|\pi_R)\le \mathsf E_\gamma(\mu\|\pi).$$
To see this, let $s=d\mu/d\pi$. On $C_R$,
$$\frac{d\mu_R}{d\pi_R}=\frac{h_R}{m_R}s,$$
and hence
$$m_R\mathsf E_{\bar\alpha_R}(\mu_R\|\pi_R)=\int_{C_R}\left(s-\frac{m_R\bar\alpha_R}{h_R}\right)_+d\pi\le \int_{C_R}(s-\gamma)_+d\pi\le \mathsf E_\gamma(\mu\|\pi).$$
Combining this comparison with Proposition~\ref{prop:localSDPI_IMH} gives the affine one-step bound
$$\mathsf E_\gamma(\mu \sK\|\pi)\le \left(1-\frac{h_R}{R}\right)\mathsf E_\gamma(\mu\|\pi)+\mu(C_R^c)+(m_R-\gamma h_R)_+.$$
If $\mu_0$ is $L$-warm, then so is $\mu_k$ for every $k$, and direct iteration yields
$$\mathsf E_\gamma(\mu_k\|\pi)\le e^{-kh_R/R}\mathsf E_\gamma(\mu_0\|\pi)+\frac{R}{h_R}\{LH_R+[\gamma H_R-(\gamma-1)]_+\}.$$
This is the purely local-SDPI recursion. It is useful as a diagnostic bound, but it pays the tail cost at every step and therefore produces an accumulated tail term of order $RH_R$.

For IMH, one can do better by exploiting the accepted/holding structure of the chain. The next lemma controls the endpoint TV distance through the rejection profile. It charges the tail through the probability of repeated rejections, rather than through a fresh core-tail split at each step.

\subsection{Proof of Theorem~\ref{thm:imh-hs-rejection-profile}}
\label{app:imh_rejection_profile}

We first record the rejection-profile estimate used in the proof of Theorem~\ref{thm:imh-hs-rejection-profile}.
\begin{lemma}[Rejection-profile estimate] \label{lem:rejection-profile} For every initial distribution $\mu_0$ and every $k\ge0$, $$\mathsf{TV}(\mu_0\sK^k,\pi)\le \int r(x)^k\,\mu_0(dx)+\int r(x)^k\,\pi(dx).$$ Consequently, if $\mu_0$ is $L$-warm with respect to $\pi$, then $$\mathsf{TV}(\mu_0\sK^k,\pi)\le (L+1)\int r(x)^k\,\pi(dx).$$ \end{lemma}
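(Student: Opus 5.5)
The natural tool is the common-randomness coupling used in \cite{deligiannidis2024importance}. I would construct two IMH chains $(X_t)$ and $(Y_t)$ with $X_0\sim\mu_0$ and $Y_0\sim\pi$ drawn independently. At each step both chains use the same proposal $y_t\sim q$ and the same uniform $U_t$. The chain $X$ moves to $y_t$ iff $U_t\le 1\wedge w(y_t)/w(X_t)$, and similarly for $Y$. Each marginal is then an IMH chain, so $X_k\sim\mu_0\sK^k$ and $Y_k\sim\pi$. The coupling inequality gives $\mathsf{TV}(\mu_0\sK^k,\pi)\le\Pr(X_k\ne Y_k)$.

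The key structural observation concerns acceptances. If $w(x)\ge w(x')$, then $1\wedge w(y)/w(x)\le 1\wedge w(y)/w(x')$. Hence whenever the chain at the larger weight accepts, the other chain accepts too, both jump to $y_t$, and they coincide from then on. So the chains can only still differ at time $k$ if the higher-weight chain has rejected at every step so far. Note that when one chain accepts and the other does not, the new positions may swap which chain has the larger weight. I would handle this by the cruder but sufficient event inclusion
\[
\{X_k\ne Y_k\}\subseteq\{X\text{ rejects at all steps } 0,\dots,k-1\}\cup\{Y\text{ rejects at all steps } 0,\dots,k-1\}.
\]
This inclusion needs justification, and that is the main point to check. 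At each step before meeting, the higher-weight chain at that time must reject, since otherwise the chains meet. If that chain is $X$ and it rejects, then $X$ stays put, and $w(X)$ remains at least $w(Y_{\text{new}})$ unless $Y$ accepted a proposal with even larger weight. That is impossible, because $Y$ accepts $y$ with $w(y)>w(X)\ge w(Y)$ only if $X$ also accepts, since $w(y)/w(X)>1$. So the identity of the larger-weight chain never changes before meeting, up to ties, which can be broken consistently. Therefore, up to time $k$, the same chain (the one with the larger initial weight) rejects at every step while staying at its initial point.

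Given this inclusion, the probability computation is straightforward. A chain started at $x$ that rejects $k$ consecutive times stays at $x$ throughout, and the rejection events are conditionally independent with probability $r(x)$ each. So the probability of $k$ consecutive rejections is exactly $r(x)^k$. Integrating over the independent initial laws yields
\[
\mathsf{TV}(\mu_0\sK^k,\pi)\le\int r^k\,d\mu_0+\int r^k\,d\pi .
\]
For the warm-start consequence, $\mathsf E_L(\mu_0\|\pi)=0$ means $d\mu_0/d\pi\le L$ $\pi$-a.e. Hence $\int r^k\,d\mu_0\le L\int r^k\,d\pi$, and the bound becomes $(L+1)\int r^k\,d\pi$.
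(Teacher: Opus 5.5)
Your proof is correct and takes the paper's route: the same common-proposal, common-uniform coupling, and the same monotonicity of $u\mapsto 1\wedge w(z)/u$. In both, the non-meeting event is controlled by the higher-weight chain rejecting $k$ times in a row. The differences are cosmetic. You couple directly from $X_0\sim\mu_0$, $Y_0\sim\pi$ and apply a union bound, whereas the paper first proves the two-point bound $\mathsf{TV}(\sK^k(x,\cdot),\sK^k(y,\cdot))\le\max\{r(x),r(y)\}^k$ and then integrates using convexity of $\mathsf{TV}$ and $\max\{a,b\}^k\le a^k+b^k$. Your explicit check that the weight ordering is preserved before meeting spells out a step the paper only asserts.
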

\begin{proof}[Proof of Lemma~\ref{lem:rejection-profile}]
    We first prove the pointwise estimate $$\mathsf{TV}(\sK^k(x,\cdot),\sK^k(y,\cdot))\le \max\{r(x),r(y)\}^k,\qquad x,y\in\mathcal X.$$ Fix $x,y\in\mathcal X$ and assume without loss of generality that $w(x)\le w(y)$. We construct two IMH chains $\{X_t\}_{t\ge0}$ and $\{Y_t\}_{t\ge0}$ with $X_0=x$ and $Y_0=y$, using the same proposals and the same acceptance variables. At each step $t$, draw $Z_t\sim q$ and $U_t\sim\mathrm{Unif}[0,1]$, independently over time. The chains update by accepting $Z_t$ whenever $$U_t\le \alpha(X_t,Z_t),\qquad U_t\le \alpha(Y_t,Z_t),$$ respectively, where $$\alpha(u,z):=1\wedge\frac{w(z)}{w(u)}.$$
The key monotonicity property is that, for every proposal $z$, $$w(x)\le w(y)\quad\Longrightarrow\quad \alpha(x,z)\ge \alpha(y,z).$$ Indeed, $$\frac{w(z)}{w(x)}\ge \frac{w(z)}{w(y)},$$ and the map $t\mapsto 1\wedge t$ is nondecreasing. Therefore, at time $0$, whenever the chain started from $y$ accepts the proposal $Z_0$, the chain started from $x$ also accepts the same proposal. In that case both chains move to $Z_0$ and meet. 
More generally, before the two chains meet, the chain started from $y$ can only remain at $y$. If it accepts at some step, then the other chain also accepts the same proposal and the chains meet. Hence the event that the chains have not met by time $k$ is contained in the event that the chain started from $y$ rejects the first $k$ proposals. Since, on this event, that chain remains at $y$ throughout, each rejection has probability $r(y)$, independently from step to step. Thus $$\mathbb P(X_k\ne Y_k)\le r(y)^k.$$ 
Therefore, we have  $$\mathsf{TV}(\sK^k(x,\cdot),\sK^k(y,\cdot))\le \mathbb P(X_k\ne Y_k)\le r(y)^k.$$ Since $w(x)\le w(y)$ implies $r(x)\le r(y)$, we have $r(y)=\max\{r(x),r(y)\}$, and therefore $$\mathsf{TV}(\sK^k(x,\cdot),\sK^k(y,\cdot))\le \max\{r(x),r(y)\}^k.$$

We now pass from the pointwise estimate to convergence to stationarity. Since $\pi$ is invariant for $\sK$, we have $$\pi=\pi\sK^k=\int \sK^k(y,\cdot)\pi(dy).$$ By convexity of total variation in its second argument, $$\mathsf{TV}(\sK^k(x,\cdot),\pi)\le \int \mathsf{TV}(\sK^k(x,\cdot),\sK^k(y,\cdot))\pi(dy).$$ Using the pointwise estimate, $$\mathsf{TV}(\sK^k(x,\cdot),\pi)\le \int \max\{r(x),r(y)\}^k\pi(dy).$$ Since $0\le r\le1$, we have $\max\{a,b\}^k\le a^k+b^k$ for all $a,b\in[0,1]$. Hence $$\mathsf{TV}(\sK^k(x,\cdot),\pi)\le r(x)^k+\int r(y)^k\pi(dy).$$ Integrating this inequality with respect to $\mu_0(dx)$ gives $$\mathsf{TV}(\mu_0\sK^k,\pi)\le \int r(x)^k\,\mu_0(dx)+\int r(y)^k\,\pi(dy).$$ Finally, if $\mu_0$ is $L$-warm, then $\mu_0\le L\pi$, and therefore $$\int r(x)^k\,\mu_0(dx)\le L\int r(x)^k\,\pi(dx).$$ Substituting this into the previous display yields $$\mathsf{TV}(\mu_0\sK^k,\pi)\le (L+1)\int r(x)^k\,\pi(dx).$$ This proves the lemma. 
\end{proof}

The proof of this lemma uses only the monotonicity of the IMH acceptance probability: for a fixed proposal $z$, the map $u\mapsto 1\wedge w(z)/u$ is nonincreasing in the current weight $u$. Thus, under common proposals and common acceptance variables, the lower-weight chain accepts whenever the higher-weight chain accepts.

We now prove Theorem~\ref{thm:imh-hs-rejection-profile}.

\begin{proof}[Proof of Theorem~\ref{thm:imh-hs-rejection-profile}]
By Lemma~\ref{lem:rejection-profile} and the $L$-warmness assumption,
$$\mathsf E_1(\mu_k\|\pi)=\mathsf{TV}(\mu_k,\pi)\le (L+1)\int r(x)^k\pi(dx).$$
Since $\mu_0\le L\pi$ and $\sK$ is $\pi$-invariant, warmness is preserved:
$$\mu_k=\mu_0\sK^k\le L\pi\sK^k=L\pi.$$
Let $s_k:=d\mu_k/d\pi$. Then $0\le s_k\le L$. For $1\le\gamma<L$ and $L>1$, the pointwise inequality
$$ (s_k-\gamma)_+\le \frac{L-\gamma}{L-1}(s_k-1)_+ $$
gives
$$\mathsf E_\gamma(\mu_k\|\pi)=\int(s_k-\gamma)_+d\pi\le \frac{L-\gamma}{L-1}\int(s_k-1)_+d\pi=\frac{L-\gamma}{L-1}\mathsf E_1(\mu_k\|\pi).$$
If $\gamma\ge L$, then $(s_k-\gamma)_+=0$ $\pi$-a.s. If $L=1$, then $\mu_0=\pi$ and the claim is trivial. Therefore, for all $\gamma\ge1$,
$$\mathsf E_\gamma(\mu_k\|\pi)\le a_{L,\gamma}\mathsf E_1(\mu_k\|\pi)\le a_{L,\gamma}(L+1)\int r(x)^k\pi(dx).$$
By Proposition~\ref{prop:rho-controls-rejection},
$$\int r(x)^k\pi(dx)\le h_R\rho_\alpha(R)^k+H_R.$$
This proves
$$\mathsf E_\gamma(\mu_k\|\pi)\le a_{L,\gamma}(L+1)\{h_R\rho_\alpha(R)^k+H_R\}.$$
Finally, Proposition~\ref{prop:localSDPI_IMH} gives $\rho_\alpha(R)\le 1-h_R/R$, and hence
$$h_R\rho_\alpha(R)^k+H_R\le h_R\left(1-\frac{h_R}{R}\right)^k+H_R\le \exp\left(-\frac{kh_R}{R}\right)+H_R.$$
This proves the explicit bound.
\end{proof}

\subsection{Proof of Corollary~\ref{cor:imh-moment}} \label{app:proofCor:imh-moment}
For every $R\ge 1$,
$$H_R:=\pi(w>R)=\mathbb E_q[w\mathbf 1_{\{w>R\}}]\le M_pR^{-(p-1)}.$$
Since $\mathbb E_q[w]=1$, Jensen's inequality gives $M_p\ge1$. Define $R_0:=(2M_p)^{1/(p-1)}.$
Then $R_0\ge1$ and $H_{R_0}\le1/2$.

We first derive a pointwise rejection bound. If $w(x)\le R_0$, then $x\in C_{R_0}$ and $h_{R_0}\ge1/2$, so Proposition~\ref{prop:rho-controls-rejection} and Proposition~\ref{prop:localSDPI_IMH} imply
$$r(x)\le \rho_\alpha(R_0)\le 1-\frac{1}{2R_0}.$$
Hence, if $w(x)\le R_0$, then we have 
$$r(x)^k\le \exp\Big(-\frac{k}{2R_0}\Big).$$
If $w(x)>R_0$, set $R_x:=w(x)$. Then $H_{R_x}\le M_pR_x^{-(p-1)}\le M_pR_0^{-(p-1)}=1/2$, so $h_{R_x}\ge1/2$. Since $x\in C_{R_x}$, Propositions~\ref{prop:rho-controls-rejection} and~\ref{prop:localSDPI_IMH} give
$$r(x)\le \rho_\alpha(R_x)\le 1-\frac{1}{2w(x)},$$
and therefore, if $w(x)>R_0$, then we have 
$$r(x)^k\le \exp\Big(-\frac{k}{2w(x)}\Big).$$
Combining the two cases, we obtain 
$$r(x)^k\le \exp\Big(-\frac{k}{2R_0}\Big)\mathbf 1_{\{w(x)\le R_0\}}+\exp\Big(-\frac{k}{2w(x)}\Big)\mathbf 1_{\{w(x)>R_0\}}.$$
Integrating with respect to $\pi$ gives
$$\int r(x)^k\pi(dx)\le \exp\Big(-\frac{k}{2R_0}\Big)+\mathbb E_\pi\Big[\exp\Big(-\frac{k}{2w}\Big)\mathbf 1_{\{w>R_0\}}\Big].$$
Let $\varphi(t):=\exp(-k/(2t))$. Since $\varphi$ is increasing and $\varphi'(t)=\frac{k}{2t^2}\exp(-k/(2t))$, the tail integration formula gives
$$\mathbb E_\pi[\varphi(w)\mathbf 1_{\{w>R_0\}}]\le \varphi(R_0)H_{R_0}+\int_{R_0}^\infty H_t\varphi'(t)dt.$$
Using $H_t\le M_pt^{-(p-1)}$ and the change of variables $u=k/(2t)$,
$$\int_{R_0}^\infty H_t\varphi'(t)dt\le M_p\int_{R_0}^\infty t^{-(p-1)}\frac{k}{2t^2}e^{-k/(2t)}dt\le M_p\Big(\frac2k\Big)^{p-1}\Gamma(p),$$
where $\Gamma(p)$ is the Gamma function. 
Thus
$$\int r(x)^k\pi(dx)\le \frac32\exp\Big(-\frac{k}{2R_0}\Big)+M_p2^{p-1}\Gamma(p)k^{-(p-1)}.$$
Finally,
$$\exp\Big(-\frac{k}{2R_0}\Big)\le 2^p(p-1)^{p-1}e^{-(p-1)}M_p k^{-(p-1)},\qquad k\ge1,$$
because $R_0^{p-1}=2M_p$. Hence there exists a constant $C_p<\infty$, depending only on $p$, such that
$$\int r(x)^k\pi(dx)\le C_pM_p k^{-(p-1)}.$$
The result follows from Theorem~\ref{thm:imh-hs-rejection-profile}.
\subsection{More details on Example~\ref{ex:no-finite-moment-imh}: Mixing time under infinite moments}\label{app:example}
Let $\mathcal X=[e,\infty)$ and define
$$\pi(dx)=\frac{1}{x(\log x)^2}\,dx,\qquad q(dx)=\frac{1}{c x^2(\log x)^2}\,dx,$$
where $c:=\mathbb E_\pi[1/X]$. Then $q$ is a probability measure and
$w(x)\coloneqq \frac{d\pi}{dq}(x)=cx.$
Note that for every $p>1$:
$$\mathbb E_q[w^p]=\int (cx)^p q(dx)=c^{p-1}\int x^{p-1}\pi(dx)=c^{p-1}\int_e^\infty\frac{x^{p-2}}{(\log x)^2}\,dx=\infty.$$

Thus no moment assumption $\mathbb E_q[w^p]<\infty$ with $p>1$ holds, implying the framework of \cite{deligiannidis2024importance} is not applicable. Nevertheless, the tail profile is explicit.  For $R\ge 1$, we have 
$$H_R:=\pi(w>R)=\pi(X>R/c)=\int_{R/c}^\infty\frac{1}{x(\log x)^2}\,dx=\frac{1}{\log(R/c)}.$$
Theorem~\ref{thm:imh-hs-rejection-profile} therefore gives, for every $L$-warm start and every $\gamma\ge1$,
$$\mathsf E_\gamma(\mu_k\|\pi)\le a_{L,\gamma}(L+1)\Big[e^{-kh_R/R}+H_R\Big].$$
Now choose $R_k:=\frac{k}{2\log\log k}.$ For all sufficiently large $k$, $R_k\ge 1$ and $H_{R_k}\le1/2$, hence $h_{R_k}\ge1/2$. Therefore,
$$e^{-\frac{kh_{R_k}}{R_k}}\le e^{-\frac{k}{2R_k}}=\frac1{\log k},$$
and
$$H_{R_k}=\frac{1}{\log(R_k/c)}=\frac{1}{\log k-\log(2c\log\log k)}\le \frac2{\log k}$$
for all sufficiently large $k$. Substituting $R=R_k$ gives
$$\mathsf E_\gamma(\mu_k\|\pi)\le \frac{3a_{L,\gamma}(L+1)}{\log k}$$
for all sufficiently large $k$. Hence $\mathsf E_\gamma(\mu_k\|\pi)\lesssim 1/\log k.$

\end{document}